\documentclass[10pt,twocolumn,aps,english,pra,superscriptaddress,showpacs]{revtex4-1}
\usepackage{amsfonts}
\usepackage[T1]{fontenc}
\usepackage[latin9]{inputenc}
\usepackage{amsmath}
\usepackage{amssymb}
\usepackage{graphicx}
\usepackage{babel}
\usepackage{booktabs}
\usepackage{tabu}
\usepackage{subfigure}
\usepackage{graphicx}
\usepackage{mathrsfs}
\usepackage{color}

\makeatletter
\@ifundefined{textcolor}{}
{
 \definecolor{BLACK}{gray}{0}
 \definecolor{WHITE}{gray}{1}
 \definecolor{RED}{rgb}{1,0,0}
 \definecolor{GREEN}{rgb}{0,1,0}
 \definecolor{BLUE}{rgb}{0,0,1}
 \definecolor{CYAN}{cmyk}{1,0,0,0}
 \definecolor{MAGENTA}{cmyk}{0,1,0,0}
 \definecolor{YELLOW}{cmyk}{0,0,1,0}
 }
\makeatother

\newtheorem{lemma}{Lemma}
\newtheorem{proposition}{Proposition}

\begin{document}

\title{How to quantify direct correlations between variables}

\author{Shengjun Wu$^{1,*}$, and Jeffery Wu$^{1,2}$ \\
{\small\itshape $^1$National Laboratory of Solid State Microstructures and School of Physics,} \\
{\small\itshape and Collaborative Innovation Center of Advanced Microstructures, Nanjing University, Nanjing 210093, China}\\
{\small\itshape $^2$School of Physics and Astronomy, Shanghai Jiao Tong University, Shanghai 200240, China}\\[2pt]
{\small$^*$sjwu@nju.edu.cn} }

\date{\today}

\keywords{Direct correlation, regularized measures, conditional mutual information, Jensen-Shannon divergence, do-calculus, confounding, Simpson's paradox.}

\begin{abstract}
A crucial question throughout statistics is whether an observed correlation
between two variables is a direct correlation or only an indirect one mediated by a
confounder. We organize the existing nonlinear measures of direct correlation into two
families, each with a systematic construction: (i) removing the direct correlation from
the joint distribution and quantifying the resulting distributional shift, and (ii)
intervening on one variable via do-calculus and quantifying the response of the other.
For every Kullback-Leibler-based measure in either family we propose a Jensen-Shannon-based
regularized analogue; the regularized measures take values in $[0,1]$, satisfy the metric
property, and are free of the singularities of the Kullback-Leibler divergence.
We analyze the achievable upper bound of each regularized measure under the observed
marginals, and derive the maximal value each measure can attain when only the
alphabet sizes of the variables are fixed; the maxima admit closed forms built on a single
binary-entropy function. The measures are compared on a decision-making model and on
three public datasets (Titanic survival, UCI Adult income, and the 1973 Berkeley
graduate admissions), with bootstrap confidence intervals for every reported value.
\end{abstract}

\maketitle

\section{Introduction}

An important mission of science is to discover the connections between various variables, and often what we have access to is only the observed data. For example, we may have a list recording health condition, age, weight, height, diet and habits for a large population, or a database of course grades and personal information for university graduates, and we are interested in the relation between habits and health in the first case and in the connections between course grades in the second. A central task of statistics, and of science more generally, is to quantify the correlations, direct correlations, and causal relations between variables.

The word \textit{correlation} was originally used to measure the strength and direction of the linear relationship between two variables, while the word \textit{association} refers to the presence of a certain (linear or nonlinear) relationship between two variables, without quantifying the strength or direction of the relationship. However, with the development of information theory, different correlation measures (especially entropic correlation measures) have been used to measure the strength of nonlinear relationship as well. Therefore, in this paper, we shall not try to distinguish between \textit{correlation} and \textit{association}, while we shall use the single term \textit{correlation} to represent the linear as well as nonlinear relationship between variables.

Correlation is very different from direct correlation. In the 1973 UC~Berkeley graduate-admission data~\cite{bickel1975sex}, $44.5\%$ of the $2691$ male applicants but only $30.4\%$ of the $1835$ female applicants were admitted, an apparent $14$-point gender gap. Broken down by the six largest departments, however, the admission rates are close for the two sexes and in four of the six departments in fact slightly favour women; the overall gap arises because women applied disproportionately to the more competitive departments, not from a direct sex effect on admission. A non-trivial direct correlation can also coexist with a covariate whose own association with $Y$ differs substantially from the direct $X$--$Y$ effect. In the 1912 Titanic passenger record~\cite{titanic_kaggle}, for instance, the overall marginal correlation between passenger class and survival is strong, and stratifying by sex --- itself a strong predictor of survival --- leaves the class--survival association within each stratum essentially unchanged (Simpson's 1951 analysis of 2$\times$2$\times$2 contingency tables~\cite{Simpson1951} formalises such stratification). Distinguishing total from direct correlation is therefore essential before drawing scientific conclusions from observed correlations; we return to both datasets in Sec.~\ref{sec:applications}.

Quantifying correlation has a long history. Pearson's correlation coefficient captures linear dependence and Spearman's rank correlation~\cite{sedgwick2014spearman} extends it to monotonic dependence. Shannon's mutual information~\cite{shannon1948,shannon19482}, built on the Kullback-Leibler (KL) divergence~\cite{kullback1951}, captures arbitrary linear and nonlinear dependence. Normalised nonlinear coefficients include distance correlation~\cite{Szekely2007measuring,Kosorok2009brownian}, the maximal information coefficient~\cite{reshef2011detecting,Kinney2014equitability}, and the Hilbert--Schmidt independence criterion~\cite{Gretton2005HSIC}; Ref.~\cite{Altman2015points} gives a pedagogical overview of the association/correlation/causation hierarchy.

Disentangling direct from indirect correlation requires more than the bivariate distribution. Reichenbach's common-cause principle~\cite{reichenbach1956direction} states that a marginal correlation between $X$ and $Y$ may arise from $X\to Y$, $Y\to X$, or a common cause $Z$; conditional-independence criteria on directed acyclic graphs~\cite{hausman1999independence,GEIGER19903,lauritzen1990independence,VermaPearl1990,Koller2009probabilistic} formalise this distinction. For linear data the partial correlation coefficient~\cite{de2004discovery} is a mainstay of gene-network reconstruction~\cite{stuart2003gene}. For nonlinear data the conditional mutual information (CMI) plays the same role in systems biology~\cite{Zhang2012inferring,Liang2008gene,Zhang2015conditional}. Variants introduced to correct artefacts of CMI include the part mutual information (PMI)~\cite{Zhao2016part}, the multiscale association analysis~\cite{Shi2018quantifying}, and the independent conditional mutual information (ICMI)~\cite{FRzhao23}, which removes indirect paths in a two-step procedure. An alternative, intervention-based route is provided by Pearl's do-calculus on Bayesian networks~\cite{Pearl1988probabilistic,Pearl2009causality,Pearl2009causal,Pearl2022direct}, which yields the average causal effect (ACE) of Holland~\cite{Holland1986causal} and axiomatic quantifications of causal influence~\cite{Janzing2013quantifying}; textbook treatments are given in Refs.~\cite{Freedman2009statistical,McNamee2003confounding,Pearl2022direct,Peters2017Elements}.

Two issues cut across the existing nonlinear measures. First, those built on the Kullback-Leibler divergence are unbounded, and are singular whenever the reconstructed distribution has a zero where the original does not --- a situation common with sparse data. Second, the proposals in the literature come in seemingly unrelated flavours. We address both issues here. We show that the existing nonlinear measures fall into two families: a \emph{removal-of-direct-correlation} family (CMI, PMI, ICMI) and a \emph{do-calculus} family (ACE, Normalized ACE, do-based mutual information). For every Kullback-Leibler-based member of either family we introduce a Jensen-Shannon-based regularised analogue; since the square root of the Jensen-Shannon (JS) divergence $\sqrt{D_{JS}}$ is a metric in $[0,1]$~\cite{Lin1991,endres2003metric}, the regularised measures are bounded and singularity-free. We further analyze the \emph{achievable upper bound} of each regularised measure under the observed marginals, which is in general strictly below $1$ and depends on the alphabet size. The measures are compared on a decision-making toy model (Sec.~\ref{sec:dm}) and on three public real datasets (Sec.~\ref{sec:applications}), with bootstrap $95\%$ confidence intervals.

\section{Correlation}

We first consider linear correlations.
The standard quantity to measure the linear correlation between two variables $X$ and $Y$ is the Pearson's correlation coefficient (PCC)~\cite{Freedman2009statistical}
\begin{equation}
\mathcal{C}_{pcc}(X:Y)=\frac{\sum_{i}(x_i-\bar{x})(y_i-\bar{y})}
{\sqrt{\sum_i(x_i-\bar{x})^2\sum_i(y_i-\bar{y})^2}}
\label{eq1}
\end{equation}
where $x_i$ and $y_i$ are the $i$-th observed values of $X$ and $Y$ respectively. The PCC can be equivalently written as
\begin{equation}
\mathcal{C}_{pcc}(X:Y)
=\frac{\overline{xy} -(\overline{x})( \overline{y})}
{\sqrt{(\overline{x^2} -(\overline{x})^2 ) (\overline{y^2} -(\overline{y})^2 ) }}
\label{eq11}
\end{equation}
where $\overline{w}$ is the average value of $w$, namely, $\overline{x}=\sum_{x}xp(x)$, $\overline{y}=\sum_{y}yp(y)$, $\overline{x^2}=\sum_{x}x^2 p(x)$, $\overline{y^2}=\sum_{y}y^2 p(y)$, and $\overline{xy}=\sum_{xy}xy p(x,y)$.
Here $p(x,y)$ is the joint probability distribution of the two variables with $p(x)=\sum_y p(x,y)$ and $p(y)=\sum_x p(x,y)$ the marginal distributions respectively.
The value of PCC is between $-1$ and $1$, and its absolute value represents the correlation strength between two variables.

Correlation between two variables X and Y is not necessarily the direct correlation, as their correlation may originate from a common source, i.e., another variable Z.
When we deal with linear correlations, the partial correlation (PC) $\mathcal{C}^D_{pc}$ \cite{de2004discovery} can measure the direct linear correlation between X and Y,
{\small
\begin{equation}
\mathcal{C}^D_{pc}(X:Y|Z)=\frac{\mathcal{C}_{pcc}(X:Y)-\mathcal{C}_{pcc}(X:Z)\mathcal{C}_{pcc}(Y:Z)}{\sqrt{(1-(\mathcal{C}_{pcc}(X:Z))^2)(1-(\mathcal{C}_{pcc}(Y:Z))^2)}}
\label{pcdef}
\end{equation}  }
where $\mathcal{C}_{pcc}(X:Y)$ is the PCC between $X$ and $Y$, similarly for other notations.
Similar to the PCC, the PC $\mathcal{C}^D_{pc}$ takes values from $-1$ to 1.

Although PCC is a very good measure of linear correlation, it fails to measure nonlinear correlations; and similarly, the PC $\mathcal{C}^D_{pc}$ also fails to measure nonlinear direct correlation.
A nice framework to study nonlinear as well as linear correlations is to convert the database into a joint probability distribution $p(x,y)$ for two variables X and Y; or a multi-party probability distribution $p(x,y,z,...)$.
In the rest of the paper, our starting point is the joint probability distribution.

Given a joint probability distribution $p(x,y)$ of two variables $X$ and $Y$, we have a number of measures of the correlation between them, without any information about direct or causal relationships. From the joint probability distribution $p(x,y)$ one can easily obtain the marginal probability distributions $p(x)=\sum_y p(x,y)$ for $X$ and $p(y)=\sum_x p(x,y)$ for $Y$, as well as the conditional probabilities $p(y|x)=p(x,y)/p(x)$ and $p(x|y)=p(x,y)/p(y)$.
A good measure $\mathcal{C}(X:Y)$ of the correlation between $X$ and $Y$ for a given joint probability distribution $p(x,y)$ should satisfy the following natural properties: \\
\noindent (1) $\mathcal{C}(X:Y)=0$ if and only if $X$ and $Y$ are independent, i.e., $p(x,y)=p(x)p(y)$. \\
\noindent (2) $\mathcal{C}(X:Y)$ attains its maximum, over all joint distributions with a given pair of marginals $p(x),p(y)$, at a ``perfect correlation'' in which $X$ completely determines $Y$ (i.e., $p(y|x)=\delta_{y,f(x)}$ for some function $f$) or $Y$ completely determines $X$ (i.e., $p(x|y)=\delta_{x,g(y)}$ for some function $g$). Note that the value of this maximum depends in general on the marginals and on the alphabet sizes, and need not equal the upper bound $1$ of the measure's nominal range; see Sec.~\ref{sec:upperbound} for the achievable upper bounds of the regularized measures introduced in this paper. \\
\noindent (3) $\mathcal{C}(X:Y)$ is (or can be) normalized to take values in $[0,1]$ (or in $[-1,1]$ for a signed measure); the range $[-1,1]$ is used for linear measures such as the PCC. When the directional split matters we use $\mathcal{C}(X\rightarrow Y)$ and $\mathcal{C}(X\leftarrow Y)$. \\

A well-known quantity that gives a nice measure of linear as well as nonlinear correlation is the mutual information $H(X:Y)$, which is defined as
\begin{equation}
H(X:Y)=H(X)+H(Y) -H(X,Y)
\label{eqdefmutinf}
\end{equation}
where the marginal Shannon entropy $H(X)= \sum_x -p(x) \log_2 p(x)$ ($H(Y)= \sum_y -p(y) \log_2 p(y)$) depends only on the marginal probability distribution of a single variable $X$ ($Y$), and the total Shannon entropy $H(X,Y)= \sum_{xy} -p(x,y) \log_2 p(x,y)$ depends only on the joint probability distribution.
Throughout the paper all entropies and divergences are measured in bits (logarithms to base $2$), and we keep the notation $H(X{:}Y)$ for the mutual information, which is often written as $I(X;Y)$ in the statistical literature.
The mutual information can also be rewritten as
\begin{equation}
H(X:Y)=H(Y) -H(Y|X)
\label{eqdefmutinf1}
\end{equation}
where the conditional entropy $H(Y|X)=H(X,Y)-H(X)$ denotes the residual uncertainty of $Y$ given $X$, which is nonnegative and cannot exceed $H(Y)$, the overall uncertainty of $Y$. A natural idea to measure the amount of correlation that $X$ can influence $Y$ is defined as the ratio of the mutual information $H(X:Y)$ to the total uncertainty $H(Y)$ of $Y$,
\begin{equation}
\mathcal{C}_{mi}(X\rightarrow Y) = H(X:Y)/H(Y)
\label{Corr(XtoY)}
\end{equation}
which reaches the minimum value $0$ when $Y$ is independent of $X$, and the maximum value $1$ when $Y$ is completely determined by $X$ ($H(Y|X)=0$).
Similarly, the amount of correlation that $Y$ can influence $X$ is defined as the ratio of the mutual information $H(X:Y)$ to the total uncertainty $H(X)$ of $X$,
\begin{equation}
\mathcal{C}_{mi}(Y\rightarrow X)= \mathcal{C}_{mi}(X\leftarrow Y) = H(X:Y)/H(X)
\label{Corr(YtoX)}
\end{equation}
which reaches the minimum value $0$ when $X$ is independent of $Y$, and the maximum value $1$ when $X$ is completely determined by $Y$ ($H(X|Y)=0$).
We use the notation $\mathcal{C}_{mi}(X:Y)$ to denote the larger one,
\begin{equation}
\mathcal{C}_{mi}(X: Y) = \max \{ \mathcal{C}_{mi}(X\rightarrow Y), \mathcal{C}_{mi}(Y\rightarrow X) \}.
\label{Corr(X:Y)}
\end{equation}

We can remove the correlation between X and Y, therefore reconstruct a joint probability distribution $q(x,y)=p(x)p(y)$ of two independent variables.
The mutual information is actually the Kullback-Leibler (KL) divergence between the original probability distribution $p(x,y)$ and the reconstructed probability distribution $p(x)p(y)$, namely
\begin{eqnarray}
H(X:Y) &= & D_{KL} (p(x,y)||p(x)p(y)) \nonumber \\
& = & \sum_{xy} p(x,y) \log_2 \frac{p(x,y)}{p(x)p(y)} .
\end{eqnarray}
In order to have a measure between 0 and 1, we can alternatively define the regularized mutual information as the square root of the Jensen-Shannon (JS) divergence between $p(x,y)$ and $p(x)p(y)$,
\begin{equation}
\hat{\mathcal{C}}_{rmi} (X:Y)= \sqrt{ D_{JS} (p(x,y)||p(x)p(y)) } .  \label{defregmutinf}
\end{equation}
The JS divergence $D_{JS}(p(x)||q(x))$ of two probability distributions $p(x)$ and $q(x)$ is defined as the average of the KL divergences from each distribution to the mean distribution $m(x)=(p(x)+q(x))/2$,
\begin{equation}
D_{JS}(p(x)||q(x)) = (D_{KL}(p(x)||m(x))+D_{KL}(q(x)||m(x)))/2 ,
\end{equation}
which is symmetric in its two arguments and can be written in terms of entropies as
\begin{equation}
D_{JS}(p(x)||q(x)) = H(m(x)) - (H(p(x))+H(q(x)))/2 .
\end{equation}
The JS divergence is bounded by $\log 2$, i.e.\ by $1$ in the base-$2$ logarithm convention used throughout this paper, and this bound is attained only when the two distributions have disjoint supports. We use the square root $\sqrt{D_{JS}}$ because it is a proper metric on probability distributions~\cite{Lin1991,endres2003metric}, bounded in $[0,1]$, and free of the singularities of the KL divergence.

\section{Direct correlation}

When we have access to more than two variables, say three variables $X$, $Y$, and $Z$, we have more correlations to consider. Even focusing on the correlation between $X$ and $Y$, we may want to separate the amount of direct correlation from the amount of indirect correlation mediated by a third variable $Z$. Given a joint distribution $p(x,y,z)$ we can reliably quantify both the total and the direct correlation between $X$ and $Y$, while causal correlation is in general under-identified from observational data alone.

Neither correlation nor direct correlation implies causation. We use the term \emph{direct correlation} for the $X$--$Y$ dependence that survives after the indirect path through $Z$ is removed, and \emph{causal correlation} for the dependence induced by an intervention on one variable; the latter may also have direct and indirect components.

From the joint probability distribution $p(x,y,z)$, we can sum over $z$ to obtain the marginal probability distribution $p(x,y)$, as $p(x,y)=\sum_z p(x,y,z)$, and similarly for $p(x,z)$ and $p(y,z)$.  From $p(x,y)$, we have all kinds of measures of correlation between $X$ and $Y$ as discussed above. However, the correlation between $X$ and $Y$ may be due to or partially due to a common parent variable (or an intermediary variable) $Z$. Can we separate the influence directly between $X$ and $Y$ from that via a third variable $Z$?

In the rest of the paper, we shall focus on quantifying the direct correlation between $X$ and $Y$.
When we deal with nonlinear correlations, the direct correlation between two variables becomes more subtle. In order to define meaningful measures, we introduce two systematic ways to do this.

\subsection{First strategy via removal of direct correlation}
\label{sec:measures_removal}

The first method to construct a measure of direct correlation between X and Y, intuitively, is to find how much a probability distribution has to change if the direct correlation between X and Y is removed.
From the given joint probability distribution $p(x,y,z)$, we can construct
another joint probability distribution $q(x,y,z)$ that contains no direct correlation between X and Y, but otherwise is as close as possible to the original distribution $p(x,y,z)$. Then we define the distance between the original $p(x,y,z)$ and the new $q(x,y,z)$ as a measure of direct correlation between X and Y.

From the definition of conditional probability $p(x,y|z) = p(x,y,z)/p(z)$, we know that
\begin{equation}
p(x,y,z) = p(x,y|z) p(z)
\end{equation}
where the conditional probability $p(x,y|z)$, a joint probability distribution of X and Y for each fixed value of Z ($Z=z$), is in general not equal to $p(x|z) p(y|z)$.
The reconstructed conditional probability $q(x,y|z) =p(x|z) p(y|z)$ indicates that all direct correlations between X and Y are removed, therefore, we can construct a new joint probability distribution $q(x,y,z)$ as
\begin{equation}
q(x,y,z) = q(x,y|z) p(z) = p(x|z) p(y|z) p(z)   \label{defqxyz}
\end{equation}
where $p(x|z)=p(x,z)/p(z)$ and $p(y|z)=p(y,z)/p(z)$ are evaluated from the original probability distributions.
The reconstructed joint probability distribution $q(x,y,z)$ is constructed from the original one $p(x,y,z)$ with all direct correlation between X and Y removed (see Fig. \ref{fig_cmi}), therefore, a natural measure to quantify the direct correlation
between X and Y is the distance between these two joint probability distributions.
The conditional mutual information (CMI) is actually such a measure. It is the KL divergence of $q(x,y,z)$ from $p(x,y,z)$,
\begin{eqnarray}
\mathcal{C}^D_{cmi}(X:Y) &=& D_{KL}(p(x,y,z)||q(x,y,z))  \nonumber \\
                           &=& \sum_{x,y,z} p(x,y,z) \log_2 \frac{p(x,y,z)}{ q(x,y,z)}
\end{eqnarray}
which quantifies the information loss when $q(x,y,z)$ is used to replace $p(x,y,z)$.
The CMI is a good measure of direct correlation between X and Y, and one can easily show that
\begin{eqnarray}
&\mathcal{C}^D_{cmi}(X:Y) = \sum_{x,y,z} p(x,y,z) \log_2 \frac{p(x,y,z)}{ p(x|z) p(y|z) p(z)}  \nonumber \\
&=  H(X,Z)+H(Y,Z)-H(X,Y,Z)-H(Z)
\end{eqnarray}
which is the familiar form.
\begin{figure}[htbp]
\centering
\includegraphics[width=0.4\textwidth]{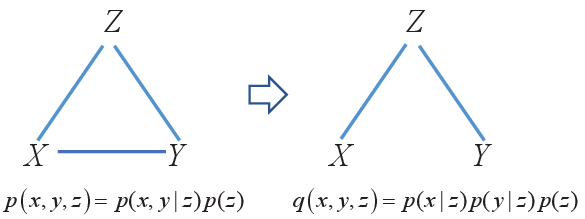}
\caption{The CMI measures the change of the joint distribution with the direct correlation between X and Y removed.}
\label{fig_cmi}\hspace{-2mm}
\end{figure}

However the measure of CMI is not normalized. For the convenience of comparison, we propose to use the JS divergence instead, and define the
normalized CMI $\mathcal{C}^D_{cmi,JS}(X:Y)$ as the JS divergence between $p(x,y,z)$ and $q(x,y,z)$.
Therefore, we propose the normalized CMI as
\begin{eqnarray}
&\mathcal{C}^D_{cmi,JS}(X:Y) = D_{JS}(p(x,y,z)||q(x,y,z)) \nonumber \\
&= H(m(x,y,z)) - (H(p(x,y,z))+H(q(x,y,z)))/2
\end{eqnarray}
with $m(x,y,z)=(p(x,y,z)+q(x,y,z))/2$ and $q(x,y,z)$ defined in (\ref{defqxyz}).
Since the square root of JS divergence is a good metric that satisfies the triangle inequality,
we also propose the regularized CMI as
\begin{equation}
\hat{\mathcal{C}}^D_{rcmi}(X:Y) = \sqrt{ D_{JS}(p(x,y,z)||q(x,y,z))}
\end{equation}
which is also a good measure of direct correlation between X and Y.

In \cite{Zhao2016part} a part mutual information (PMI) was introduced in a similar way. In order to reconstruct a joint probability $q(x,y,z)$ with no direct correlation between X and Y, one can replace $p(x|z)$ ($=\sum_y p(x|y,z)p(y|z)$) by $q(x|z) =\sum_y p(x|y,z)p(y)$, and replace $p(y|z)$ ($=\sum_x p(y|x,z)p(x|z)$)
by $q(y|z) =\sum_x p(y|x,z)p(x)$, i.e., a new joint probability distribution $q'(x,y,z)$ is constructed as (see Fig. \ref{fig_pmi})
\begin{eqnarray}
q'(x,y,z) &=& q(x|z) q(y|z) p(z) \nonumber \\
&=& p(z) \sum_y p(x|y,z)p(y)  \sum_x p(y|x,z)p(x)   \label{reconstructedjointprobforPMI}
\end{eqnarray}
The part  mutual information (PMI) is actually the KL divergence of $q'(x,y,z)$ from $p(x,y,z)$,
\begin{equation}
\mathcal{C}^D_{pmi}(X:Y) = D_{KL}(p(x,y,z)||q'(x,y,z)) .
\end{equation}
Similarly, we can define the normalized and regularized versions of PMI.

\begin{figure}[htbp]
\centering
\includegraphics[width=0.4\textwidth]{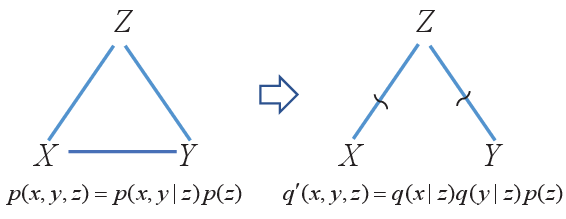}
\caption{The PMI measures the change of the joint distribution with the direct correlation between X and Y removed and the joint distribution carefully reconstructed as in (\ref{reconstructedjointprobforPMI}).}
\label{fig_pmi}\hspace{-2mm}
\end{figure}

A recent measure called the independent conditional mutual information (ICMI) \cite{FRzhao23} is also  introduced along this line, though in a two-step procedure.
From the joint probability distribution $p(x,y,z)$, one first removes the correlations between X and Z to form
the probability distribution $p_1(x,y,z)=p(y|x,z)p(x)p(z)$ (see Fig. \ref{fig_icmi}), which keeps the direct correlation between X and Y.  Furthermore, one removes the direct correlation between X and Y to form
another new probability distribution $p_2(x,y,z)=p(x)p(y,z)$. The one-way ICMI ($X\rightarrow Y$) is defined as the change of the probability distribution during the second step in which the direct correlation between X and Y is removed, i.e., the KL divergence of $p_2(x,y,z)$ from $p_1(x,y,z)$,
\begin{eqnarray}
\mathcal{C}^D_{icmi}(X\rightarrow Y) &=& D_{KL}(p_1(x,y,z)||p_2(x,y,z))  \nonumber \\
&=& D_{KL}(p(y|x,z)p(x)p(z)||p(x)p(y,z))
\end{eqnarray}
which denotes the one-way direct correlation from X to Y.
\begin{figure}[htbp]
\centering
\includegraphics[width=0.5\textwidth]{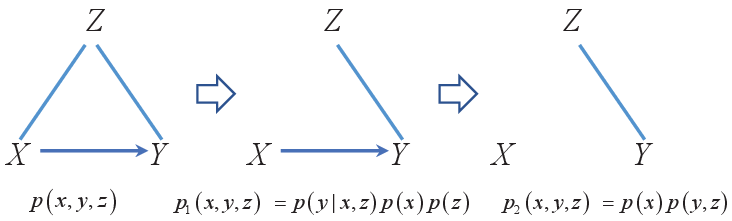}
\caption{The one-way ICMI $\mathcal{C}^D_{icmi}(X\rightarrow Y)$ measures the distance between a joint distribution $p_1$ with correlation between Z and X removed and another joint distribution $p_2$ with the correlation between X and Y further removed as well.}
\label{fig_icmi}\hspace{-2mm}
\end{figure}

Similarly, the one-way ICMI from Y to X is defined as
\begin{equation}
\mathcal{C}^D_{icmi}(X\leftarrow Y) = D_{KL}(p(x|y,z)p(y)p(z)||p(y)p(x,z)) .
\end{equation}
The average of the two one-way correlations is a measure of the direct correlation between X and Y,
\begin{equation}
\mathcal{C}^D_{icmi}(X : Y) = \left(\mathcal{C}^D_{icmi}(X\rightarrow Y) + \mathcal{C}^D_{icmi}(X\leftarrow Y) \right) /2 .
\end{equation}

The definitions of PMI and ICMI both rely on the KL divergence, which may have singularity problems for sparse data (some cases do not occur) and may not be bounded. These problems can be overcome if the KL divergence is replaced by the JS divergence, which gives a normalized distance between two probability distributions and has no singularity problems.
For all the direct correlation measures that rely on a distance measure between two probability distributions, we can always choose to replace it with the JS divergence to have regularized measures.

In particular, the regularized one-way ICMI from X to Y is defined as
\begin{eqnarray}
\hat{\mathcal{C}}^D_{ricmi}(X\rightarrow Y) = \sqrt{D_{JS}(p_1(x,y,z)||p_2(x,y,z)) } \nonumber \\
= \sqrt{D_{JS}(p(y|x,z)p(x)p(z)||p(x)p(y,z)) }  .
\end{eqnarray}
The regularized one-way ICMI from Y to X is defined as
\begin{eqnarray}
\hat{\mathcal{C}}^D_{ricmi}(Y\rightarrow X) = \sqrt{D_{JS}(p(x|y,z)p(y)p(z)||p(y)p(x,z)) } .
\end{eqnarray}
The regularized version for the two-way ICMI can be defined similarly,
\begin{equation}
\hat{\mathcal{C}}^D_{ricmi}(X : Y) = (\hat{\mathcal{C}}^D_{ricmi}(X\rightarrow Y) + \hat{\mathcal{C}}^D_{ricmi}(X\leftarrow Y) ) /2 . \label{defricmi2way}
\end{equation}

\subsection{Second strategy via do-calculus}
\label{sec:measures_do}

Another way to quantify the direct correlation between $X$ and $Y$ is to see how much the distribution of $Y$ changes in response to a freely chosen value $x$ of $X$. Pearl's do-calculus on directed acyclic graphs~\cite{Pearl2009causality,Pearl1988probabilistic} formalises this through the intervention operator $\mathrm{do}(X=x)$, which severs all arrows into $X$ and fixes $X$ to $x$. Throughout this subsection we assume that $Z$ is a sufficient back-door adjustment set for $X\to Y$~\cite{Pearl2009causality,Peters2017Elements}, i.e.\ $Z$ blocks every back-door path from $X$ to $Y$ and contains no descendant of $X$; under this condition
\begin{equation}
p(y| do (X=x)) = \sum_z p(y|x,z)p(z) ,
\end{equation}
which denotes a new reconstructed conditional probability distribution of $Y$ with fixed $x$ when all influence to X is removed (see Fig. \ref{fig_do}).
\begin{figure}[htbp]
\centering
\includegraphics[width=0.45\textwidth]{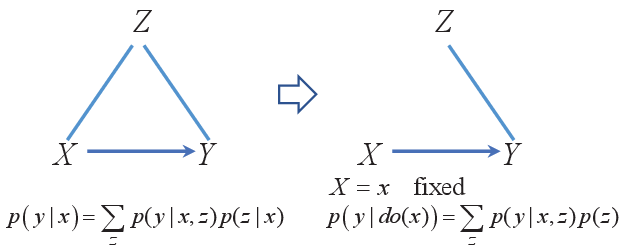}
\caption{The reconstructed conditional do-probability $p(y|do(x))$ has removed all influences to $X$, therefore, the change of the distribution $p(y|do(x))$ of Y in response to a change in $x$ represents the direct influence of X to Y.  The maximum distance between two distributions $p(y|do(x))$ and $p(y|do(x^{\prime}))$ of Y over any pair of values ($x$, $x^{\prime}$) of X is a natural way to quantify the direct influence of X on Y.}
\label{fig_do}\hspace{-2mm}
\end{figure}
The reconstructed conditional probability via do-calculus,
$p(y|\mathrm{do}(X=x)) \equiv p(y|\mathrm{do}(x))$, is different from the original conditional probability $p(y|x)$ and can be obtained from the latter by replacing $p(z|x)$ by $p(z)$ in the equality
$p(y|x)= \sum_z p(y|x,z)p(z|x)$; this replacement removes any dependence between $X$ and $Z$. How much the distribution $p(y|\mathrm{do}(x))$ of $Y$ changes in response to a change in $x$ represents the amount of direct correlation from $X$ to $Y$.
Namely, we can use the maximum distance between two distributions $p(y|do(x))$ and $p(y|do(x^{\prime}))$ of Y over any pair of values ($x$, $x^{\prime}$) of X as a measure of the direct correlation from X to Y,
\begin{equation}
\mathcal{C}^D_{do}(X \rightarrow Y) = \max_{x,x'} D_{ist} \left( p(y|do(x))  || p(y|do(x')) \right)
\end{equation}
where each different distance measure $D_{ist}$ gives a different quantitative measure of direct correlation between X and Y.

The average causal effect (ACE) of Holland~\cite{Holland1986causal} and Pearl~\cite{Pearl2009causality} is such a measure,
\begin{equation}
\mathcal{C}^D_{ace}(X \rightarrow Y) = \max_{x,x'} \max_y \left( p(y|\mathrm{do}(x)) -p(y|\mathrm{do}(x')) \right),
\end{equation}
which depends on a particular pair of values of $x$ and a particular value of $y$. We advocate instead the normalized average causal effect (NACE),
\begin{equation}
\mathcal{C}^D_{nace}(X \rightarrow Y)=
\max_{x,x'} \frac{1}{2}  \sum_y | p(y|\mathrm{do}(x)) -p(y|\mathrm{do}(x')) |  ,
\end{equation}
which integrates the total variation distance between the two intervened distributions of $Y$. Both ACE and NACE lie in $[0,1]$. When $Y$ is independent of $X$ under the intervention both measures give $0$, and when there exist $x\ne x'$ such that $p(y|\mathrm{do}(x))$ and $p(y|\mathrm{do}(x'))$ have disjoint supports, then $\mathcal{C}^D_{nace}=1$ while $\mathcal{C}^D_{ace}\leq 1$.
Furthermore, when we consider continuous variables,  NACE has a natural generalization while ACE does not,
\begin{equation}
\mathcal{C}^D_{nace}(X \rightarrow Y)=
 \max_{x,x'} \frac{1}{2}  \int_y \left| p(y|do(x)) -p(y|do(x')) \right| dy  .
\end{equation}
ACE depends on particular values of X and Y, while NACE takes account of all values of Y and it still depends on particular values of X.
The absolute value in the definition makes ACE and NACE not so smooth with respect to parameter changes. One can solve this problem by using the KL divergence to define the distance $D_{ist}$,
\begin{equation}
\mathcal{C}^D_{aceKL}(X \rightarrow Y) =  \max_{x,x'} D_{KL} \left( p(y|do(x)) || p(y|do(x')) \right) ,
\end{equation}
which is always nonnegative, but has no upper bound. In order to have a similar but normalized measure, we advocate using the JS divergence to define the distance $D_{ist}$.
As the square root of JS divergence is actually a good metric that satisfies the triangle inequality, we can actually define a regularized measure of direct correlation from X to Y as
\begin{equation}
\hat{\mathcal{C}}^D_{race}(X \rightarrow Y) =  \max_{x,x'} \sqrt{ D_{JS} \left( p(y|do(x)) || p(y|do(x')) \right) }
\end{equation}
which we shall refer to as the regularized average causal effect (RACE).

In the rest of the subsection, we propose an alternative way to define direct correlation via do-calculus.
From the reconstructed conditional probability $p(y|do(x))$, we can reconstruct a joint probability distribution $p_{do}(x,y)$ of two variables X and Y via
\begin{equation}
p_{do}(x,y)= p(y|do(x)) p(x)
\end{equation}
where $p(x)$ is given by the original marginal probability distribution. The reconstructed joint probability distribution $p_{do}(x,y)$ has all influences between X and Z removed, therefore any measure of correlation between X and Y for $p_{do}(x,y)$ is a measure of the direct correlation from X to Y in the original distribution. We propose another measure $\mathcal{C}^D_{mi,do}(X \rightarrow Y)$ of direct correlation via the mutual information between X and Y for $p_{do}(x,y)$ as
\begin{eqnarray}
& &\mathcal{C}^D_{mi,do}(X \rightarrow Y) =  \mathcal{C}_{mi}(X \rightarrow Y)|_{p_{do}(x,y)} \nonumber \\
&=& \left( H(p_{do}(x)) + H(p_{do}(y)) - H(p_{do}(x,y)) \right)/H(p_{do}(y))
\end{eqnarray}
where $p_{do}(x)=\sum_y p_{do}(x,y)$ and $p_{do}(y)=\sum_x p_{do}(x,y)$ are the marginal probability distributions of the reconstructed joint distribution $p_{do}(x,y)$.
Another regularized measure of the direct correlation between X and Y, i.e., the regularized mutual information from do-calculus, is defined as the regularized mutual information in the reconstructed joint probability distribution  $p_{do}(x,y)$,
\begin{equation}
\hat{\mathcal{C}}^D_{rmi,do} (X:Y)= \sqrt{ D_{JS} (p_{do}(x,y)||p_{do}(x)p_{do}(y)) } .
\end{equation}
where $p_{do}(x)=\sum_y p_{do}(x,y) = \sum_y p(y|do(x)) p(x) = \sum_y \sum_z p(y|x,z)p(z) p(x) = \sum_z p(z) p(x) = p(x)$, however, $p_{do}(y)=\sum_x p_{do}(x,y) = \sum_x p(y|do(x)) p(x) = \sum_x \sum_z p(y|x,z)p(z) p(x)$ which is different from $p(y)$ in general.

With more different measures of the distance between two probability distributions, we can have more different measures of direct correlation.  The measures mentioned above are quite natural, each with a direct intuitive physical meaning, there are many other choices that will not be discussed here.

\section{More discussions}

\subsection{Summary of the new measures}

In summary, we are given a certain database of three variables X, Y, and Z with a three-variable joint probability distribution $p(x,y,z)$, which is the only thing we have to start with.
From the three-variable joint probability distribution $p(x,y,z)$ the two-variable joint probability distribution $p(x,y)$ as well as other marginal probability distributions can easily be obtained.
Based on these probability distributions, we have proposed several normalized or regularized measures of correlation to describe total correlation as well as direct correlation between two variables.

In order to have a regularized measure of total correlation in the range $[0,1]$,
we have proposed the regularized mutual information
\begin{equation}
\hat{\mathcal{C}}_{rmi} (X:Y)= \sqrt{ D_{JS} (p(x,y)||p(x)p(y)) }
\end{equation}
to measure the total correlation between X and Y.

In order to quantify the direct correlation between X and Y, we have also proposed several measures of direct correlation, either normalized or regularized.
These measures include the regularized conditional mutual information
\begin{eqnarray}
&\hat{\mathcal{C}}^D_{rcmi}(X:Y) = \sqrt{ D_{JS}(p(x,y,z)||q(x,y,z))} \nonumber  \\
                    &= \sqrt{ D_{JS}(p(x,y,z)|| p(x|z) p(y|z) p(z) )}
\end{eqnarray}
the regularized one-way ICMI from X to Y
\begin{eqnarray}
\hat{\mathcal{C}}^D_{ricmi}(X\rightarrow Y) = \sqrt{D_{JS}(p_1(x,y,z)||p_2(x,y,z)) } \nonumber \\
= \sqrt{D_{JS}(p(y|x,z)p(x)p(z)||p(x)p(y,z)) }
\end{eqnarray}
the regularized two-way ICMI
\begin{equation}
\hat{\mathcal{C}}^D_{ricmi}(X : Y) = (\hat{\mathcal{C}}^D_{ricmi}(X\rightarrow Y) + \hat{\mathcal{C}}^D_{ricmi}(X\leftarrow Y) ) /2
\end{equation}
the normalized average causal effect
\begin{equation}
\mathcal{C}^D_{nace}(X \rightarrow Y) = \max_{x,x'} \frac{1}{2}  \sum_y | p(y|do(x)) -p(y|do(x')) |
\end{equation}
the regularized average causal effect
\begin{equation}
\hat{\mathcal{C}}^D_{race}(X \rightarrow Y) =  \max_{x,x'} \sqrt{ D_{JS} \left( p(y|do(x)) || p(y|do(x')) \right) }
\end{equation}
and the regularized mutual information from do-calculus
\begin{equation}
\hat{\mathcal{C}}^D_{rmi,do} (X:Y)= \sqrt{ D_{JS} (p_{do}(x,y)||p(x)p_{do}(y)) } .
\end{equation}
We need not return to the previous section for the definitions if we note that $p(y|do(x)) =\sum_z p(y|x,z) p(z)$, $p_{do}(x,y) =p(y| do(x)) p(x)$ and $p_{do}(y) = \sum_x p_{do}(x,y)$.

Each of the above new measures has a value between 0 and 1, with the value 0 indicating a zero (direct) correlation, and the value 1 is an upper bound for maximum (direct) correlation (see Sections~\ref{sec:upperbound} and \ref{sec:alphabetmax} for more discussions).
Furthermore, each one can be easily extended to the case of continuous variables with the summations replaced by integrals, and each measure still lies in between the range $[0,1]$ for the continuous variable case.

\subsection{A faithful measure of direct correlation requires a prior influence framework}

The measures above have been introduced without any prior influence framework, i.e.\ a graph specifying the possible causal relationships among the variables. Here we show, via a standard example, that any measure built from the joint distribution alone cannot distinguish certain inequivalent causal structures. Consider the two influence diagrams in Fig.~\ref{fig_twononseparablemodels}.
\begin{figure}[htbp]
\centering
\includegraphics[width=0.35\textwidth]{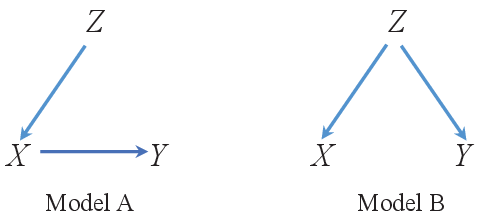}
\caption{Two Markov-equivalent~\cite{VermaPearl1990} influence diagrams yielding the same joint distribution $p(x,y,z)=\frac{1}{2}\delta_{y,x}\delta_{x,z}$.}
\label{fig_twononseparablemodels}\hspace{-2mm}
\end{figure}
In model~A, $p(z)=1/2$ on $z\in\{0,1\}$, $X$ is determined by $Z$ via $p(x|z)=\delta_{x,z}$, and $Y$ is determined by $X$ via $p(y|x)=\delta_{y,x}$; the joint distribution factorizes as $p(x,y,z)=p(y|x)p(x|z)p(z)$. In model~B, the same $p(z)$ is used, and both $X$ and $Y$ are determined by $Z$. The two models give the identical joint distribution, but $X$ has a direct influence on $Y$ in model~A while it has none in model~B. Any measure that depends only on $p(x,y,z)$ is therefore blind to this distinction. This is the observational-equivalence phenomenon of Verma and Pearl~\cite{VermaPearl1990}: two directed acyclic graphs with the same Markov equivalence class cannot be distinguished by observational data alone. A faithful quantification of direct correlation therefore requires a prior influence diagram, obtained from domain knowledge or from experiments.

\subsection{Strategies for sparse data}
\label{sec:sparse}

In a real situation, from the observed data we have a joint probability distribution $p(x,y,z)$, which is the starting point for our investigation. The above measures of direct correlation depend on comparing a reconstructed probability distribution and an original one, in defining a reconstructed distribution we usually need to use some conditional probability distribution calculated from the original joint distribution. However, the observed data may be sparse data that sometimes gives no incidences for some combinations of the variables. For example, in the cases shown in Fig. \ref{fig_twononseparablemodels}, the combinations of $(0,1)$ and $(1,0)$ for $(x,z)$ never occur.
In the definition of $p(y|do(x))=\sum_z p(y|x,z)p(z)$ we need to calculate the conditional probability $p(y|x,z)$ from the original data by $p(y|x,z)= p(x,y,z)/p(x,z)$, which is clearly ill-defined as $p(x,z)$ is $0$ for the combinations of $(0,1)$ and $(1,0)$.

To make sure the reconstructed conditional distributions in do-calculus are well defined, we propose three strategies. At every $(x,z)$ with $p(x,z)=0$ we replace the undefined $p(y|x,z)$ by (a) the uniform distribution $1/d_Y$ ($d_Y$ is the number of possible values of Y), (b) the marginal $p(y)$, or (c) the conditional marginal $p(y|z)$. Each of these choices defines a valid conditional distribution of $Y$ for every fixed $x$; the three are ordered from least informative (a) to most informative (c), and reflect different implicit priors on the missing cells. Strategy (a) only works for discrete random variables, while strategies (b) and (c) work for both discrete and continuous random variables.

For the two models in Fig.~\ref{fig_twononseparablemodels}, strategies (a) and (b) both give $p(0|do(0)) =p(1|do(1))=\frac{3}{4}$, $p(0|do(1)) =p(1|do(0)) =\frac{1}{4}$, and
\begin{eqnarray}
\mathcal{C}^D_{ace}(X \rightarrow Y) &=& \mathcal{C}^D_{nace}(X \rightarrow Y) = \tfrac{1}{2},\\
\mathcal{C}^D_{mi,do}(X \rightarrow Y) &=& \tfrac{3}{4} \log_2 3 - 1 \approx 0.189 ;
\end{eqnarray}
whereas strategy (c) gives $p(y|do(x)) =\frac{1}{2}$ and
\begin{equation}
\mathcal{C}^D_{ace} = \mathcal{C}^D_{nace} = \mathcal{C}^D_{mi,do} = 0 .
\end{equation}
Strategy (c) therefore matches CMI (which is zero in this case) and is the most conservative in that it does not manufacture any direct correlation from the missing cells. In this paper, unless stated otherwise, we use strategy (b), which adds the minimum amount of prior information consistent with the marginal of $Y$.

The definitions of PMI, ICMI and their regularized versions also rely on reconstructed probability distributions that may have similar singularity problems, we can use the same three strategies to calculate PMI, ICMI and the corresponding regularized measures for sparse data. However, CMI is always well defined, it never has such singularity problems even for sparse data.
The direct correlation between X and Y (in either model A or model B) in Fig. \ref{fig_twononseparablemodels} is $0$ in terms of CMI; and is also $0$ in terms of PMI and ICMI and measures from do-calculus if we take singularity strategy (c), but nonzero if we take singularity strategy (a) or (b) for sparse data. From this we know that strategy (c) for sparse data is very strict and tends not to add any possible direct correlation in reconstructing a complete probability distribution from an incomplete one.

It is not possible to consider all the measures, instead in the rest of the paper, we shall focus on the normalized or regularized measures of direct correlation.

\subsection{Upper bounds of the regularized measures}
\label{sec:upperbound}

Every regularized measure introduced in Sec.~\ref{sec:measures_removal}--\ref{sec:measures_do} is built on the square root of a JS divergence between two probability distributions, and therefore lies in $[0,1]$~\cite{Lin1991,endres2003metric}. The theoretical value $1$ is attained only when the two distributions being compared have disjoint supports. This is an extreme condition that is almost never reached in practice, for two reasons: (i)~for each measure the two distributions involved share, by construction, a large portion of their support, and (ii)~the attainable maximum depends on the marginals $p(x), p(y), p(z)$ and on the alphabet sizes $d_X=|X|, d_Y=|Y|, d_Z=|Z|$. We therefore distinguish the \emph{trivial upper bound} $1$ from the \emph{achievable upper bound}, i.e.\ the maximum of the measure over all joint distributions that share the observed marginals, and we give elementary analytical and numerical bounds below.
The complementary question --- the maximum of each measure when \emph{only the alphabet sizes} are fixed and the marginals are free --- is answered in Section~\ref{sec:alphabetmax}.

\paragraph{Regularized total mutual information $\hat{\mathcal{C}}_{rmi}$.} For the measure $\hat{\mathcal{C}}_{rmi}(X{:}Y)=\sqrt{D_{JS}(p(x,y)\|p(x)p(y))}$ one has the obvious inclusion
$\mathrm{supp}\,p(x,y)\subseteq\mathrm{supp}\,p(x)\times\mathrm{supp}\,p(y)=\mathrm{supp}\,p(x)p(y)$,
so the supports of the two distributions overlap on $\mathrm{supp}\,p(x,y)$ and the value $1$ cannot be attained. The explicit maximum under uniform marginals $p(x)=p(y)=1/d$ and deterministic coupling $Y=X$ can be computed in closed form,
\begin{equation}
\hat{\mathcal{C}}_{rmi}^{\max}(d)= \sqrt{ 1-\frac{1}{2}\log_2 \big(\frac{d+1}{d}\big) - \frac{1}{2d}\log_2 \big(d+1 \big) },  \label{maxrmi}
\end{equation}
which gives $0.558$ for $d=2$, $0.741$ for $d=4$, $0.910$ for $d=16$, and tends to $1$ as $d\to\infty$. For binary variables the regularized MI can therefore never exceed $\approx 0.56$, a fact that must be kept in mind when interpreting numerical values. For the case of continuous variables we have $\hat{\mathcal{C}}_{rmi}^{\max}(d)=1$.

\paragraph{Regularized conditional mutual information $\hat{\mathcal{C}}^D_{rcmi}$.} For $\hat{\mathcal{C}}^D_{rcmi}=\sqrt{D_{JS}(p(x,y,z)\|p(x|z)p(y|z)p(z))}$ the two distributions share the marginals $p(x,z)$ and $p(y,z)$, hence also the support of $p(x,z)\cap p(y,z)$. Inside every stratum $Z=z$ the argument of the previous paragraph applies with $p(x|z)$ and $p(y|z)$ in place of $p(x)$ and $p(y)$, so the stratum-wise maximum is controlled by $d_X$, $d_Y$ and the stratum-wise marginals. The overall upper bound can be computed by enumerating deterministic couplings $Y=f_z(X)$ within each $z$; for the three real datasets of Sec.~\ref{sec:applications} we report the resulting achievable upper bound alongside every measurement in Table~\ref{tab:realmeasures}.
If instead the marginals are left completely free and only the alphabet sizes $d_X= d_Y =d$ are fixed, the maximum of $\hat{\mathcal{C}}^D_{rcmi}$ equals the bound (\ref{maxrmi}); this and the analogous maxima for the other regularized measures are derived in Section~\ref{sec:alphabetmax}.

\paragraph{Regularized PMI and ICMI.} The PMI reconstructs $p(x|z)$ and $p(y|z)$ by marginalising $X$ and $Y$ under the \emph{unconditional} marginals of the partner variable (see Eq.~(\ref{reconstructedjointprobforPMI})), while the ICMI proceeds in two steps, first severing $X{-}Z$ and then $X{-}Y$. Neither reconstruction admits a simple closed-form upper bound, but in both cases $\sqrt{D_{JS}}\le 1$, and the achievable bound is again obtained by optimising over the deterministic couplings that share the observed $p(x,z)$; we report these bounds numerically in Sec.~\ref{sec:applications}. In general, the rPMI reconstruction shrinks the reconstructed joint towards the product of the unconditional marginals and therefore tends to yield larger upper bounds than the rCMI reconstruction, while the rICMI reconstruction involves different marginals on the two sides of the KL/JS symbol and the bound is asymmetric between the $X\to Y$ and $X\leftarrow Y$ directions.

\paragraph{NACE and RACE.} Both $\mathcal{C}^D_{nace}$ and $\hat{\mathcal{C}}^D_{race}$ compare distributions of $Y$ of dimension $d_Y$, not of the full $(X,Y,Z)$ triple. The NACE is a total variation distance between the two extreme $p(y|\mathrm{do}(x))$, $p(y|\mathrm{do}(x'))$; its maximum is $1$ and is attained whenever the two distributions have disjoint supports, which is possible already for $d_Y=2$. Similarly, $\hat{\mathcal{C}}^D_{race}\le 1$ with equality attainable under the same condition. In particular, NACE and RACE saturate at $1$ in the two-point limit where one value of $X$ deterministically drives $Y$ to one value and another value of $X$ deterministically drives $Y$ to the other value; this is what makes these two measures comparatively larger on datasets in which one specific contrast dominates (e.g.\ a two-arm clinical trial).

\paragraph{Do-based regularized mutual information $\hat{\mathcal{C}}^D_{rmi,\mathrm{do}}$.} The measure $\hat{\mathcal{C}}^D_{rmi,\mathrm{do}}=\sqrt{D_{JS}(p_{\mathrm{do}}(x,y)\|p_{\mathrm{do}}(x)p_{\mathrm{do}}(y))}$ has an alphabet-dependent upper bound identical in form to the one for $\hat{\mathcal{C}}_{rmi}$ derived above, but applied to the intervened joint distribution $p_{\mathrm{do}}$. Because $p_{\mathrm{do}}$ has a strictly smaller support of possible couplings than the empirical $p$, the achievable upper bound for $\hat{\mathcal{C}}^D_{rmi,\mathrm{do}}$ is generally below that for $\hat{\mathcal{C}}_{rmi}$.

The practical consequence of these bounds is that a single observed value, e.g.\ $\hat{\mathcal{C}}^D_{rcmi}=0.05$, is small or large only in comparison to its achievable upper bound given the observed marginals. In Sec.~\ref{sec:applications} we therefore report, for each regularized measure on each dataset, both the point estimate (with a bootstrap 95\% confidence interval) and the achievable upper bound obtained by maximising over deterministic couplings that preserve $p(x,z)$.

\subsection{Maxima of the regularized measures under fixed alphabet sizes alone}
\label{sec:alphabetmax}

We now answer the complementary question: how large can each regularized measure of direct correlation possibly become when \emph{only} the alphabet sizes $(d_X,d_Y,d_Z)$ are fixed and the maximization runs over \emph{all} joint probability distributions $p(x,y,z)$? For the regularized CMI, for example, we maximize $y=D_{JS}\bigl(p\|q\bigr)=H(m)-\tfrac12\bigl(H(p)+H(q)\bigr)$, with $m=(p+q)/2$ and $q$ the reconstruction (\ref{defqxyz}), and take $\sqrt{y}$. Because the PMI, ICMI and do-calculus reconstructions involve conditional distributions that are undefined on zero-probability cells, all suprema below are taken over strictly positive $p(x,y,z)$ (where every reconstruction is unambiguous); the suprema of the rPMI and rICMI are then approached, but not attained, along sequences converging to deterministic boundary couplings.

All closed forms below are built from a single function
\begin{equation}
\varphi(c) \;=\; H_2\Bigl(\frac{1+c}{2}\Bigr)\;-\;\frac{1}{2}H_2(c), \qquad c\in[0,1],
\label{defphi}
\end{equation}
where $H_2(u)=-u\log_2 u-(1-u)\log_2(1-u)$ is the binary entropy. The function $\varphi$ is strictly decreasing with $\varphi(0)=1$ and $\varphi(1)=0$; it is the JS divergence between a distribution and a second one that contains it with weight $c$,
as Lemma~\ref{lem:overlap} in the Appendix shows that $D_{JS}(P\|Q)=\varphi(c)$ whenever $Q= cP+(1-c)P^{\perp}$ with $P^{\perp}$ supported off the support of $P$. In particular $\varphi(1/d)$ coincides with the square of the bound (\ref{maxrmi}), $\hat{\mathcal{C}}_{rmi}^{\max}(d)=\sqrt{\varphi(1/d)}$.

\begin{proposition}[removal family and do family]
\label{th:rcmi}
Let $J^{*}(d_X,d_Y)$ denote the maximum of $D_{JS}(r(x,y)\|r(x)r(y))$ over all two-variable joint distributions $r$ on alphabets of sizes $d_X,d_Y$. Then, with $m=\min(d_X,d_Y)$ and for every $d_Z\ge 1$,
\begin{equation}
\max_p \hat{\mathcal{C}}_{rmi} \;=\; \max_p \hat{\mathcal{C}}^D_{rcmi} \;=\; \max_p \hat{\mathcal{C}}^D_{rmi,do} \;=\; \sqrt{J^{*}(d_X,d_Y)} ,
\end{equation}
and the three maxima are attained, e.g.\ at $p(x,y,z)=r^{*}(x,y)p(z)$ with $r^{*}$ a maximizer of $J^{*}$ and arbitrary $p(z)$ with full support. Moreover $J^{*}(d_X,d_Y)\ge\varphi(1/m)$, with equality verified numerically to within $10^{-9}$ for all $2\le d_X,d_Y\le 6$, the maximizer being the uniform deterministic coupling $r^{*}(x,y)=\delta_{x,y}/m$ supported on $m$ values of each variable.
\end{proposition}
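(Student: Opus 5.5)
The plan is to reduce all three quantities to the two-variable problem that defines $J^{*}(d_X,d_Y)$. I will show that each maximum is at most $\sqrt{J^{*}}$, then exhibit one distribution at which all three equal $\sqrt{J^{*}}$. The bound $J^{*}\ge\varphi(1/m)$ then comes from Lemma~\ref{lem:overlap}, and the numerical equality is only reported, not proved.

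\emph{Upper bounds.} For $\hat{\mathcal{C}}_{rmi}$ there is nothing to prove, since the bound is the definition of $J^{*}$. For $\hat{\mathcal{C}}^D_{rcmi}$ the key step is a stratification identity. The distributions $p(x,y,z)$ and $q(x,y,z)=p(x|z)p(y|z)p(z)$ have the same $Z$-marginal $p(z)$, and so does their midpoint $m$. Applying the chain rule $H(XYZ)=H(Z)+\sum_z p(z)H(XY|Z=z)$ to $m$, $p$ and $q$ in the entropic form of $D_{JS}$ makes the $H(Z)$ terms cancel. This gives
\begin{equation*}
D_{JS}(p\|q)=\sum_z p(z)\,D_{JS}\bigl(p(x,y|z)\,\big\|\,p(x|z)p(y|z)\bigr).
\end{equation*}
Each summand is the JS divergence between a two-variable joint distribution on $d_X\times d_Y$ and the product of its marginals, so it is at most $J^{*}$. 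Hence the average is at most $J^{*}$. For $\hat{\mathcal{C}}^D_{rmi,do}$, note that $p_{do}(x,y)=p(y|\mathrm{do}(x))p(x)$ is itself a joint distribution on $d_X\times d_Y$ with $X$-marginal $p(x)$, as computed earlier. Its regularized MI is therefore at most $\sqrt{J^{*}}$ by definition. This holds whatever sparse-data strategy is used to fill in $p(y|x,z)$, since any such strategy still produces a valid joint distribution.

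\emph{Attainment.} Take $p(x,y,z)=r^{*}(x,y)p(z)$ with $p(z)$ of full support. Then $p(x,y|z)=r^{*}(x,y)$ in every stratum, so the identity above gives $D_{JS}(p\|q)=J^{*}$. Moreover $p(y|x,z)=r^{*}(y|x)$, so $p(y|\mathrm{do}(x))=r^{*}(y|x)$ and $p_{do}=r^{*}$, which gives $J^{*}$ again. The one delicate point is that $r^{*}$ may vanish on whole rows of $X$ when $d_X>m$. In that case $p(y|\mathrm{do}(x))$ is ill-defined for those $x$, but they carry weight $p(x)=0$ in $p_{do}$, so the value is unaffected. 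Alternatively, I would perturb to a strictly positive $p$ and use continuity of $D_{JS}$, consistent with the paper's convention of taking suprema over positive distributions.

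\emph{Lower bound on $J^{*}$.} For $r(x,y)=\delta_{x,y}/m$ on $m$ symbols, set $P=r$ and $Q=r(x)r(y)$, which is uniform on the $m\times m$ block. Then $Q=\tfrac1m P+(1-\tfrac1m)P^{\perp}$, where $P^{\perp}$ is uniform on the off-diagonal cells of the block. Lemma~\ref{lem:overlap} then gives $D_{JS}=\varphi(1/m)$. Equality $J^{*}=\varphi(1/m)$ is stated only as a numerical finding. I would present it as the output of maximizing over the simplex for $2\le d_X,d_Y\le6$ and not attempt a proof.

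I expect the main obstacle to be stating the stratification identity cleanly and justifying the boundary and zero-cell handling. The rest is bookkeeping.
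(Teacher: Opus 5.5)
Your proposal is correct and follows the paper's proof essentially step for step. The paper also uses the stratification identity (its Lemma~\ref{lem:strat}) to bound the rCMI by $J^{*}$, and bounds the do-based measure by noting that $p_{do}$ is itself a joint distribution on $d_X\times d_Y$. It likewise obtains attainment from the product $r^{*}(x,y)p(z)$, gets $J^{*}\ge\varphi(1/m)$ from Lemma~\ref{lem:overlap} on the uniform diagonal, and leaves the reverse inequality to numerics. Your remark that, when $d_X>m$, the rows with $p(x)=0$ carry zero weight in $p_{do}$ (so the ill-defined $p(y|\mathrm{do}(x))$ cannot affect the value) addresses a point the paper passes over silently, and is worth keeping.
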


Proposition~\ref{th:rcmi} (proved in the Appendix) states that neither the conditioning on $Z$ in the rCMI nor the do-intervention in $\hat{\mathcal{C}}^D_{rmi,do}$ enlarges the maximal value beyond that of the plain regularized mutual information: all three share the maximum $\sqrt{\varphi(1/m)}$, which is the bound (\ref{maxrmi}) evaluated at $d=m$, achieved by the perfectly correlated uniform pair $(X,Y)$ with $Z$ independent of it. In particular the maximum is independent of $d_Z$.

\begin{proposition}[regularized one-way ICMI]
\label{th:ricmi}
For all alphabet sizes with $d_Y\ge 2$,
\begin{equation}
\sup_p \hat{\mathcal{C}}^D_{ricmi}(X\rightarrow Y) \;\le\; \sqrt{1-\frac{1}{d_X}},
\label{icmibound}
\end{equation}
and the bound is approached (in the limit of deterministic couplings between $X$ and $Z$) whenever $d_X\le d_Z$. For $d_X>d_Z$ the supremum is strictly smaller and is given by a partition optimization (Appendix); for example
$\sup \hat{\mathcal{C}}^{D\,2}_{ricmi}(X{\to}Y)$ equals
$1/(2-\varphi(1/2))\approx 0.5922$ for $(d_X,d_Y,d_Z)=(3,3,2)$,
$\tfrac12+\tfrac12\varphi(1/2)\approx 0.6556$ for $(4,4,2)$, and
$\approx 0.7103$ for $(4,3,3)$.
The corresponding statements for $\hat{\mathcal{C}}^D_{ricmi}(X\leftarrow Y)$ follow by exchanging $d_X$ and $d_Y$. Numerically, the supremum of the two-way measure $\hat{\mathcal{C}}^D_{ricmi}(X{:}Y)$ coincides in all computed cases with the common value of the two one-way suprema, both being approached simultaneously; for $d_X=d_Y=d\le d_Z$ it equals $\sqrt{1-1/d}$.
\end{proposition}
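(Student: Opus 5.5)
The plan is to exploit the fact that both distributions in the one-way rICMI share the same $X$-marginal, and then to pass from the JS divergence to the total variation distance. Write $P(x,y,z)=p(x)P_x(y,z)$ with $P_x(y,z)=p(z)p(y|x,z)$, and $Q(x,y,z)=p(x)Q_0(y,z)$ with $Q_0(y,z)=p(y,z)$, which does not depend on $x$. Then $m=p(x)\,(P_x+Q_0)/2$, and the JS divergence splits as $D_{JS}(P\|Q)=\sum_x p(x)\,D_{JS}(P_x\|Q_0)$.

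\textbf{Upper bound.} I would invoke the standard inequality $D_{JS}\le \mathrm{TV}$ (in bits), due to Lin~\cite{Lin1991}. It gives
\[
D_{JS}(P\|Q)\le \sum_x p(x)\sum_{y,z}\bigl(P_x(y,z)-Q_0(y,z)\bigr)^{+}.
\]
The key observation is the decomposition $Q_0(y,z)=\sum_{x'}p(x'|z)P_{x'}(y,z)$. This implies the pointwise bound $Q_0\ge p(x|z)P_x$, so that
\[
\bigl(P_x-Q_0\bigr)^{+}\le \bigl(1-p(x|z)\bigr)P_x(y,z).
\]
Summing over $y$ gives $\sum_z p(z)\bigl(1-p(x|z)\bigr)$. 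Averaging over $x$ then yields
\[
1-\sum_z p(z)\sum_x p(x)p(x|z)=1-\sum_x p(x)^2\le 1-\frac{1}{d_X},
\]
by Cauchy--Schwarz. Taking square roots proves (\ref{icmibound}), uniformly over strictly positive $p$ and hence for the supremum.

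\textbf{Approachability when $d_X\le d_Z$.} Equality requires three things at once: $p(x)$ uniform, JS equal to TV (which forces the overlapping parts of $P_x$ and $Q_0$ to coincide exactly), and the non-overlapping parts to be disjoint. I would take $p(x)=1/d_X$ and let $Z$ copy $X$ deterministically on $d_X$ of its values, so that $p(x|z)=\delta_{x,z}$ in the limit. Inside the stratum $z=x$, set $y=g(z)$, so that $P_x$ and $Q_0$ coincide there. For off-diagonal cells $z\ne x$, which carry vanishing probability in $p$ but on which the reconstruction $p(y|x,z)$ is still defined along positive sequences, choose $p(y|x,z)$ concentrated on a value $y\ne g(z)$; this is possible because $d_Y\ge2$. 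Then $P_x$ places mass $(1-1/d_X)$ on cells disjoint from the support of $Q_0$, and a direct application of Lemma~\ref{lem:overlap} (with $c$ chosen per block) shows that the JS divergence tends to $1-1/d_X$. I would then perturb to strictly positive $p$ and check continuity of the reconstruction along the sequence.

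\textbf{The case $d_X>d_Z$ and the examples.} The main obstacle is $d_X>d_Z$. Here some $z$ must be shared by several $x$, so $p(x|z)<1$ and $P_x$ overlaps $Q_0$ only partially; the TV bound is then no longer tight. My plan is to reduce to deterministic limits in which each $z$ carries a block (a partition class) of $x$-values. For these limits I would compute the JS divergence exactly blockwise using $\varphi$, via Lemma~\ref{lem:overlap}, and then optimise over block sizes and over the weights $p(x)$ and $p(z)$. Concavity in the weights should reduce this to a small finite optimisation, which would reproduce values such as $1/(2-\varphi(1/2))$ for $(3,3,2)$. Proving that no non-deterministic interior point beats these partition configurations is the delicate step. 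I would first try a convexity argument in $p(y|x,z)$ for fixed $p(x,z)$, and otherwise fall back on numerical verification. The reverse direction $X\leftarrow Y$ follows by symmetry, and the two-way statement follows by averaging when both one-way suprema are approached by the same sequence.
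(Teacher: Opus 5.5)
Your proposal is correct, and for the upper bound it takes a genuinely different route from the paper.

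\textbf{Upper bound: the two routes.} The paper stratifies on the compound variable $(x,z)$, writing $D_{JS}(p_1\|p_2)=\sum_{x,z}p(x)p(z)\,D_{JS}(P_{x,z}\|M_z)$ with $M_z=\sum_{x'}p(x'|z)P_{x',z}$. It bounds each term using convexity of $D_{JS}$ in its second argument and the trivial bound $D_{JS}\le 1$ for the pairs $x'\neq x$, obtaining $D_{JS}(P_{x,z}\|M_z)\le 1-p(x|z)$.

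You stratify only on $X$ and replace the convexity step by Lin's inequality $D_{JS}\le\mathrm{TV}$. The pointwise domination $p(y,z)\ge p(x,z)\,p(y|x,z)$ then gives, in effect, $\mathrm{TV}(P_x,Q_0)\le 1-p(x)$. Both routes end at $1-\sum_x p(x)^2$.

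\textbf{What each route buys.} The paper's decomposition is reused verbatim for the attainability and for the partition formula: every $(x,z)$ stratum is either light, contributing $1$, or heavy, contributing $\varphi(c)$. To evaluate your own construction you implicitly re-split by $z$ anyway (``$c$ chosen per block''), though the equality case of $D_{JS}\le\mathrm{TV}$ would do it directly.

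Your route, in turn, makes the equality case transparent. Cellwise equality in $D_{JS}\le\mathrm{TV}$, together with tightness of the domination, forces $p(x|z)\in\{0,1\}$ on the support of $p(z)$, and Cauchy--Schwarz forces $p(x)$ to be uniform. Parametrize by $(p(x,z),\{p(y|x,z)\})$: the measure extends continuously to the compact product of simplices, so the supremum over strictly positive $p$ is a maximum there. Your equality analysis then \emph{proves} that the supremum is strictly below $1-1/d_X$ when $d_X>d_Z$. The paper supports this only by the informal remark that a deterministic $X=f(Z)$ cannot be uniform, plus numerics.

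Your attainability construction ($Z$ copying $X$) is a special case of the paper's block construction and is fine.

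\textbf{The case $d_X>d_Z$.} Your plan coincides with the paper's: partition configurations evaluated through Lemma~\ref{lem:overlap}, with optimality against all other configurations certified only numerically. To reach the quoted values, however, you need the specifics the paper supplies:
one $Y$-value is reserved for the light cells, which caps block sizes at $k_z\le d_Y-1$;
the heavy conditionals within a block have disjoint supports, so heavy cell $i$ contributes $w_z^2c_i\varphi(c_i)$;
strict concavity of $c\varphi(c)$ gives $A(k)=\varphi(1/k)$.
The objective then reads $1-\sum_z w_z^2\bigl(1-\varphi(1/k_z)\bigr)$, an explicit concave quadratic in the weights. With $k=(2,1)$ it yields $1/(2-\varphi(1/2))$ at $(3,3,2)$.

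\textbf{The two-way claim.} It need not remain conditional when $d_X=d_Y=d\le d_Z$. Take $X=Y=Z$ uniform, and in each stratum $z$ put the vanishing off-diagonal mass of order $\epsilon$ on the cells $x=y\neq z$, with only $O(\epsilon^2)$ on cells with $x\neq y$. Then both one-way measures tend to $\sqrt{1-1/d}$ along the same sequence.
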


\begin{proposition}[regularized PMI]
\label{prop:rpmi}
For all alphabet sizes with $d_Z\ge 2$,
\begin{equation}
\sup_p \hat{\mathcal{C}}^D_{rpmi} \;=\; \sqrt{\varphi\bigl(1/(d_X d_Y)\bigr)}
\label{pmibound}
\end{equation}
whenever $d_Z\ge d_X+d_Y-1$,
and the same value is already reached at $d_Z=d$ when $d_X=d_Y=d$ (in particular at $(2,2,2)$). The supremum is nondecreasing in $d_Z$, and for smaller $d_Z$ it is given by a weight optimization over stratified permutation couplings (Appendix A; the optimal weights equalize the per-stratum overlaps, as the worked $(3,3,2)$ case there shows); for example $\sup\hat{\mathcal{C}}^{D\,2}_{rpmi}=\max_{w}[\,w\varphi(w^2)+(1-w)\varphi((1-w)^2/2)\,]\approx 0.6480$ at $(3,3,2)$.
\end{proposition}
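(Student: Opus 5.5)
The plan is to prove a matching lower bound (an explicit construction) and an upper bound, and to obtain monotonicity in $d_Z$ by a simple embedding. Throughout I work with strictly positive $p$ and pass to limits, as stipulated at the start of Sec.~\ref{sec:alphabetmax}.

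\emph{Lower bound.} The idea is to make each stratum a near point mass, so that $p$ sits inside $q'$ with a fixed weight $c$, and then to use Lemma~\ref{lem:overlap}.

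1. Choose a set $E$ of cells $(x_z,y_z)$, one per stratum $z$, forming a spanning tree of the complete bipartite graph $K_{d_X,d_Y}$. Such a tree has $d_X+d_Y-1$ edges, so it needs $d_Z\ge d_X+d_Y-1$ strata.
2. Take the tree given by the north-west corner rule. It carries strictly positive weights $w_z$ whose row sums are $1/d_X$ and whose column sums are $1/d_Y$. Set $p(z)=w_z$ and $p(x,y|z)\to\delta_{x,x_z}\delta_{y,y_z}$. Then the marginals are exactly uniform.
3. The conditionals $p(x|y,z)$ for $y\neq y_z$, and $p(y|x,z)$ for $x\neq x_z$, are determined only by the vanishing perturbation, so they can be prescribed in the limit. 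I would choose them as deterministic maps (permutation-like) such that $q(x|z)=\sum_y p(x|y,z)p(y)$ and $q(y|z)$ spread their mass off the cell.
4. In the limit, $q'(x_z,y_z,z)=p(z)\,p(y_{z})\,p(x_z)=p(z)/(d_Xd_Y)$. On the support of $p$ this gives $q'=cp$ with $c=1/(d_Xd_Y)$, and the rest of $q'$ lies off the support of $p$.
5. Lemma~\ref{lem:overlap} then gives $D_{JS}(p\|q')\to\varphi(1/(d_Xd_Y))$.

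For $d_X=d_Y=d$, a permutation (diagonal) with weights $1/d$ already covers every row and column, so $d_Z=d$ suffices. This includes the case $(2,2,2)$.

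\emph{Monotonicity in $d_Z$.} Any distribution on $d_Z$ strata embeds into $d_Z+1$ strata by adding a stratum of probability $\epsilon\to0$. Both $p$ and $q'$ change continuously under this, so the supremum cannot decrease.

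\emph{Upper bound: the main obstacle.} I must show $D_{JS}(p\|q')\le\varphi(1/(d_Xd_Y))$ for every positive $p$. Two ingredients are needed.

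First, a pointwise domination. From $q(x|z)\ge p(x|y,z)p(y)$ and $q(y|z)\ge p(y|x,z)p(x)$,
\begin{equation}
q'(x,y,z)\ \ge\ p(x,y,z)\,\frac{p(x)p(y)\,p(x,y,z)\,p(z)}{p(x,z)p(y,z)}=:\lambda(x,y,z)\,p(x,y,z).
\end{equation}

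Second, a monotonicity lemma. I would write $D_{JS}=\sum_i g(p_i,q_i)$ with $g$ positively homogeneous and nonincreasing in $q_i$ on the region $q_i\le p_i$. Combined with concavity in the ratio, this should give $D_{JS}\le\sum_i p_i\,\varphi(\lambda_i)$, which is exactly the form of Lemma~\ref{lem:overlap} applied cellwise.

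What remains is to show $\sum p\,\varphi(\lambda)\le\varphi(1/(d_Xd_Y))$. Because $\varphi$ is decreasing, I would try Jensen's inequality in the right convex variable, together with the averaging identity $\sum_{x,y}p(x|z)p(y|z)\cdot\bigl[p(x)p(y)\bigr]^{-1}\cdots$, to show that the weighted effective overlap is at least $1/(d_Xd_Y)$.

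I expect this averaging step to be the hard part. If the crude cellwise bound through $\lambda$ turns out too weak, I would instead reduce to deterministic stratified couplings. This would use convexity of $D_{JS}$ in the pair, with the extreme points being the stratified permutation couplings of Appendix A. I would then check directly that those couplings never exceed $\varphi(1/(d_Xd_Y))$.
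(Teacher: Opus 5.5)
\textbf{What matches.} Your lower-bound construction is the paper's. You use single-cell strata whose free conditionals $p(\cdot|y',z)$, $y'\neq y_z$, and $p(\cdot|x',z)$, $x'\neq x_z$, point away from the occupied cell, so stratum $z$ has overlap $c_z=p(x_z)p(y_z)$ and Lemma~\ref{lem:overlap} applies. A vertex of the transportation polytope (your north-west-corner tree; degenerate zero edges are simply dropped) then makes every overlap $1/(d_Xd_Y)$, and the uniform diagonal handles $d_X=d_Y$. Your monotonicity argument is correct and supplies something the paper only asserts. An exact version: split one stratum into two identical copies of half its weight. This changes no global marginal and no conditional, so the value is unchanged by Lemma~\ref{lem:strat}.

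\textbf{The gap: the upper bound.} Without it you only have $\sup\hat{\mathcal{C}}^{D}_{rpmi}\ge\sqrt{\varphi(1/(d_Xd_Y))}$. Note that the paper does not prove it either; the appendix certifies the matching upper bound only by global numerical optimization. The first step of your sketch can be kept. The domination $q'\ge\lambda p$ is right, and so is $D_{JS}(p\|q')\le\sum p\,\varphi(\lambda)$, but not for your reason: monotonicity of $g$ on $q_i\le p_i$ says nothing about cells with $q'_i>p_i$ or about the mass of $q'$ off $\mathrm{supp}\,p$. The clean route is
\[
D_{JS}(p\|q)=1-\sum_i\sigma(p_i,q_i),\qquad \sigma(p,q)=\tfrac12\bigl[p\log_2\tfrac{p+q}{p}+q\log_2\tfrac{p+q}{q}\bigr].
\]
Here $\sigma$ is homogeneous, nondecreasing in $q$, zero at $p=0$, and satisfies $1-\sigma(1,c)=\varphi(c)$ (extend $\varphi$ by this formula when $\lambda>1$).

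\textbf{Where the argument fails: the averaging step.} Since $\varphi''(c)=1/(2\ln 2\,c(1+c))>0$, Jensen in $\lambda$ runs the wrong way. The usable concavity is that of $c\varphi(c)$; with weights $p/\lambda$ it gives $\sum p\,\varphi(\lambda)\le\varphi(1/M)$, where
\[
M=\sum_{x,y,z}\frac{p}{\lambda}=\sum_z p(z)\Bigl(\sum_x\frac{p(x|z)}{p(x)}\Bigr)\Bigl(\sum_y\frac{p(y|z)}{p(y)}\Bigr).
\]
You would then need $M\le d_Xd_Y$, which is false. For binary $X,Y$ with two nearly single-cell strata of weights $\epsilon$ on $(0,0)$ and $1-\epsilon$ on $(1,1)$, $M\to1/\epsilon+1/(1-\epsilon)$, which is unbounded. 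So no bound through $M$ alone can work; the argument must tie large $\sum_x p(x|z)/p(x)$ to small $p(z)$. Already for single-cell strata your inequality becomes $\sum_{x,y}W_{xy}\,\varphi(a_xb_y)\le\varphi(1/(d_Xd_Y))$ for every coupling $W$ of arbitrary marginals $a,b$. Equality holds for all $W$ when $a,b$ are uniform, and you give no argument for this inequality.

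\textbf{The fallback is invalid.} The map $p\mapsto D_{JS}(p\|q'(p))$ is not convex, because $q'$ is not affine in $p$. For instance, two fully independent distributions, each of value $0$, mix into a smoothed $X=Y=Z$ configuration of strictly positive value. Moreover, the extreme points of the simplex are point masses, not stratified permutation couplings.

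\textbf{Missing regime.} You do not treat $d_Z<d_X+d_Y-1$, including the $(3,3,2)$ value in the statement. There the paper at least supplies the lower bound by construction: one single-cell stratum of weight $w$ plus a uniform two-cell permutation stratum on the remaining rows and columns. Their overlaps are $w^2$ and $(1-w)^2/2$ (the block case of Lemma~\ref{lem:overlap}), and the objective is stationary at $w=\sqrt2-1$, where the overlaps coincide, giving $\varphi(3-2\sqrt2)\approx0.6480$. Optimality there is again only numerical.
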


The suprema of the rPMI and the rICMI are approached along sequences of strictly positive distributions converging to \emph{sparse} deterministic couplings: for the rICMI, $X$ becomes a deterministic function of $Z$ while the conditional distribution of $Y$ jumps between disjoint supports; for the rPMI, the optimal configurations converge to ``generalized permutation'' couplings in which $X$, $Y$ and $Z$ are all deterministically linked (the simplest case being $X=Y=Z$ uniform for $d_X=d_Y=d_Z=2$). At the limiting distributions themselves the measures are defined only through the sparse-data strategies of Section~\ref{sec:sparse} --- and under the conservative strategy (c) they drop to $0$ there, while the CMI vanishes identically along these limits. The closer a dataset is to such a deterministic coupling, the more strongly the reported rPMI and rICMI values depend on the singularity strategy, which is a further practical reason to read each measure against its own maximal scale.

Table~\ref{tab:alphabetmax} collects the maxima for the symmetric case $d_X=d_Y=d_Z=d$. Three features deserve emphasis. First, the maxima are ordered,
\begin{equation}
\sqrt{\varphi(1/d)} \;<\; \sqrt{1-1/d} \;<\; \sqrt{\varphi(1/d^{2})} \;<\; 1 ,
\end{equation}
i.e.\ for every alphabet size the rCMI (and $\hat{\mathcal{C}}^D_{rmi,do}$) operates on the most compressed scale, the rICMI on an intermediate one, and the rPMI on the widest one, while NACE and RACE can reach the trivial bound $1$ exactly (Section~\ref{sec:upperbound}). A reported value of, say, $0.5$ therefore means ``$90\%$ of the maximum'' for a binary rCMI but only ``$68\%$ of the maximum'' for a binary rPMI. Second, every maximum tends to $1$ as $d\to\infty$, so the distinction matters most for the small alphabets typical of categorical data. Third, the maxima of the rCMI family are attained at benign, strictly positive distributions, whereas those of the rPMI/rICMI families live at sparse boundary configurations; this is the alphabet-size counterpart of the empirical observation in Section~\ref{sec:applications} that rPMI and rICMI tend to report systematically larger values than the rCMI on the same data.

\begin{table*}[htbp]
\centering
\caption{Maxima of the regularized measures over all joint distributions $p(x,y,z)$ with fixed alphabet sizes $d_X=d_Y=d_Z=d$ (measure scale, i.e.\ after the square root; logarithms base 2). The rCMI/rMI/do-MI column is exact and attained (Proposition~\ref{th:rcmi}, closed form $\sqrt{\varphi(1/d)}$, numerically confirmed to $10^{-9}$); the rICMI columns follow Proposition~\ref{th:ricmi} ($\sqrt{1-1/d}$, supremum); the rPMI column follows Proposition~\ref{prop:rpmi} ($\sqrt{\varphi(1/d^{2})}$, supremum, numerically confirmed to $10^{-7}$); NACE/RACE attain $1$ for every $d\ge 2$. }
\label{tab:alphabetmax}
\begin{tabular}{@{}lccccc@{}}
\hline
$d$ & $\hat{\mathcal{C}}_{rmi},\ \hat{\mathcal{C}}^D_{rcmi},\ \hat{\mathcal{C}}^D_{rmi,do}; $ & $\hat{\mathcal{C}}^D_{ricmi}(X{\to}Y); $ & $\hat{\mathcal{C}}^D_{ricmi}(X{:}Y);$ & $\hat{\mathcal{C}}^D_{rpmi};$ & $\mathcal{C}^D_{nace},\ \hat{\mathcal{C}}^D_{race}$ \\
\hline
2 & 0.5579 & 0.7071 & 0.7071 & 0.7408 & 1 \\
3 & 0.6776 & 0.8165 & 0.8165 & 0.8599 & 1 \\
4 & 0.7408 & 0.8660 & 0.8660 & 0.9102 & 1 \\
5 & 0.7810 & 0.8944 & 0.8944 & 0.9369 & 1 \\
$\to\infty$ & $\to 1$ & $\to 1$ & $\to 1$ & $\to 1$ & 1 \\
\hline
\end{tabular}
\end{table*}

For non-square alphabets the same machinery applies: the rCMI family depends only on $\min(d_X,d_Y)$; the one-way rICMI depends on $d_X$ (direction $X\to Y$) through the bound (\ref{icmibound}) and on $d_Z$ through the partition refinement; and the rPMI depends on the product $d_Xd_Y$ once $d_Z$ is large enough. All values quoted above were obtained independently by global numerical maximization (multi-start quasi-Newton/Adam optimization over the probability simplex in $64$-bit precision; see the Supplementary Material) and by the analytic constructions of the Appendix, which agree to the stated precision.

\section{Comparison of measures of direct correlation with examples}
\label{sec:dm}

In this section, we compare the different measures of direct correlation with examples of fixed models.

\subsection{A decision-making model}
Now we consider a decision-making model with three participants Xie, Yu, and Zhang, who are going to vote on a particular proposal (see Fig.~\ref{fig_exampledecisionmaking}). The variables X, Y, and Z respectively denote their choices, with each value either $0$ (veto) or $1$ (vote).
\begin{figure}[htbp]
\centering
\includegraphics[width=0.17\textwidth]{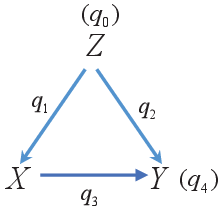}
\caption{A decision-making model. Here $q_1$ denotes the strength of Z's influence on X, similarly for $q_2$ and $q_3$.}
\label{fig_exampledecisionmaking}\hspace{-2mm}
\end{figure}
Zhang makes the choice $Z=z$ with probability $p(z)$ given by
\begin{equation}
p(z) =\left\{
\begin{aligned}
\frac{1+q_0}{2}  &\;\;\; (z=0) \\
\frac{1-q_0}{2}  &\;\;\; (z=1) .
\end{aligned}
\right.
\end{equation}
Xie usually seeks advice from Zhang, and let $q_1$ denote Zhang's influence strength on Xie's choice. $q_1=1$ means a complete influence and Xie's choice is the same as Zhang's ($x=z$); $q_1=0$ means Xie's choice is independent of Zhang's; and $q_1=-1$ means Xie's choice is always different from Zhang's. The dependence of Xie's choice on Zhang's is given by the conditional probability
\begin{equation}
p(x|z) =\left\{
\begin{aligned}
\frac{1+q_1}{2}  &\;\;\; (x=z) \\
\frac{1-q_1}{2}  &\;\;\; (x=1-z) .
\end{aligned}
\right.
\end{equation}
Yu's choice depends both on Zhang's via a strength $q_2$ and on Xie's via a strength $q_3$ in the following way. Yu first makes a pre-choice $y_1$ with
\begin{equation}
p(y_1|x) =\left\{
\begin{aligned}
\frac{1+q_3}{2}  &\;\;\; (y_1=x) \\
\frac{1-q_3}{2}  &\;\;\; (y_1=1-x) .
\end{aligned}
\right.
\end{equation}
We see that $y_1$ is independent of $x$ when $q_3=0$, and $y_1$ is more likely to be the same as $x$ when $q_3$ approaches $1$, be the opposite of $x$ when $q_3$ approaches $-1$.
Similarly Yu also takes advice from Zhang by making another pre-choice $y_2$ with
\begin{equation}
p(y_2|z) =\left\{
\begin{aligned}
\frac{1+q_2}{2}  &\;\;\; (y_2=z) \\
\frac{1-q_2}{2}  &\;\;\; (y_2=1-z) .
\end{aligned}
\right.
\end{equation}
When Zhang and Xie's advice agrees, i.e., Yu's pre-choices $y_1$ and $y_2$ are equal, then Yu follows, i.e., $y=y_1=y_2$; otherwise, when $y_1 \neq y_2$, then Yu makes a choice $Y=y$ according to his own judgement with $p(y=0) = \frac{1+q_4}{2}$ and $p(y=1) = \frac{1-q_4}{2}$. Here $q_4$ denotes Yu's own bias towards the choice $y=0$. In other words,
\begin{align}
p(y|y_1,y_2) =\left\{
\begin{aligned}
1  &\;\;\; (y=y_1) \\
0  &\;\;\; (y=1- y_1)
\end{aligned}
when \;\; y_1=y_2
\right. \\
p(y|y_1,y_2) =\left\{
\begin{aligned}
\frac{1+q_4}{2}  &\;\;\; (y =0) \\
\frac{1-q_4}{2}  &\;\;\; (y =1)  .
\end{aligned}
when \;\; y_1 \neq y_2
\right.
\end{align}

From the above conditional probability distributions, we have the joint probability distribution of all variables
$p(y,y_1,y_2,x,z)=p(y|y_1,y_2) p(y_1|x) p(y_2|z) p(x|z) p(z)$ and that of three variables X, Y, and Z
\begin{equation}
p(y,x,z)=\sum_{y_1,y_2} p(y|y_1,y_2) p(y_1|x) p(y_2|z) p(x|z) p(z) .
\end{equation}
From the joint probability distribution $p(y,x,z)$ we can analyze all correlations between the variables. In particular, we focus on direct correlation from X to Y.

This decision-making model has pre-assumed influence relationships, in particular, $q_1$ represents the influence strength from Z to X, $q_2$ represents the influence strength from Z to Y, $q_3$ represents the influence strength from X to Y. The different measures of direct correlation discussed have different values in general, but each gives a faithful measure of the direct correlation. With all parameters but $q_3$ fixed, we find that the measures of direct correlation increase with the increase of $q_3$ (see Fig. \ref{fig_DMmeasDCvq3}).
\begin{figure}[htbp]
\centering
\includegraphics[width=0.5\textwidth]{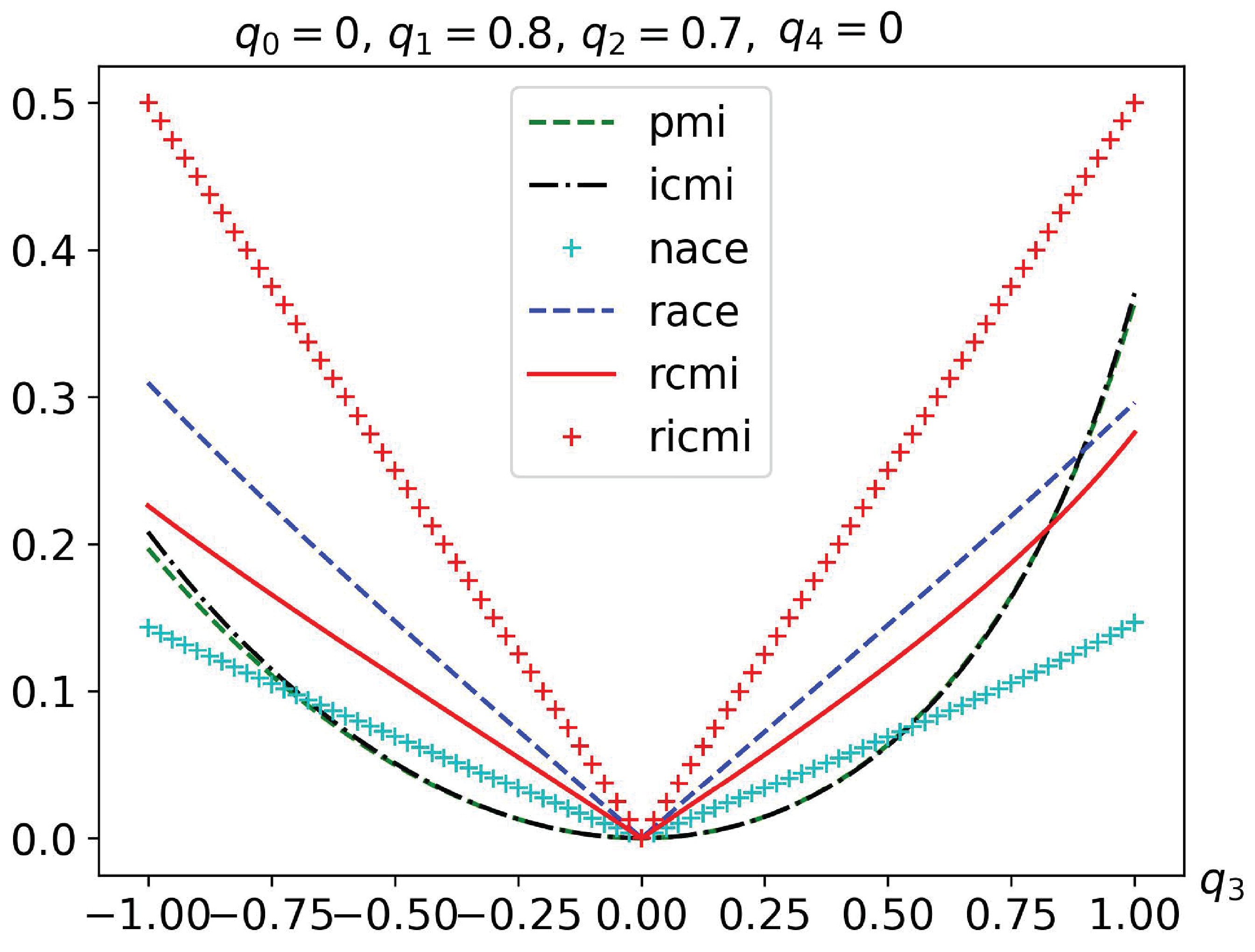}
\caption{Various measures of direct correlation in the decision-making model as functions of $q_3$ with other parameters fixed.}
\label{fig_DMmeasDCvq3}\hspace{-2mm}
\end{figure}

Consider the special case $q_0=0$, $q_1=q_2=q_3=1$, $q_4=0$: then $x=z$, $y_1=x=z$, $y_2=z$, so $y=x=z$ always, and the joint distribution reduces to $p(x,y,z)=1/2$ when $x=y=z$ and $0$ otherwise. The CMI vanishes, while the other measures depend on the singularity strategy because the data are sparse. Under strategy (c) all direct-correlation measures above vanish. Under strategy (a) or (b) several are non-zero, e.g.\ $\mathcal{C}^D_{pmi}\approx 0.83$, $\mathcal{C}^D_{nace}=0.50$, $\hat{\mathcal{C}}^D_{race}\approx 0.43$, $\mathcal{C}^D_{mi,\mathrm{do}}(X\to Y)\approx 0.19$. Strategy (c) correctly refuses to manufacture direct correlation here: this sparse case is indistinguishable from the three alternative influence structures in Fig.~\ref{fig_DMspecialequivalentcases}, all of which have zero direct correlation. This deterministic configuration $X=Y=Z$ is also precisely the type of boundary coupling along which the regularized PMI and ICMI approach their alphabet-size maxima of Section~\ref{sec:alphabetmax}, which explains the extreme sensitivity of these two measures to the singularity strategy in its vicinity.
\begin{figure}[htbp]
\centering
\includegraphics[width=0.4\textwidth]{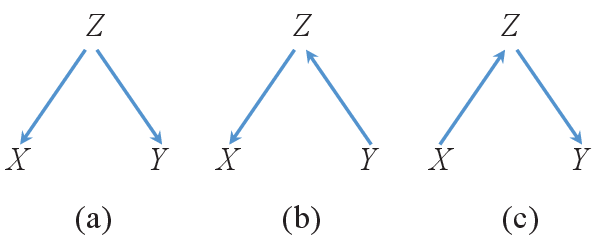}
\caption{Three models of different relationships indistinguishable to the special case of decision-making model when $q_0=0$, $q_1=q_2=q_3=1$, and $q_4=0$.}
\label{fig_DMspecialequivalentcases}\hspace{-2mm}
\end{figure}

\subsection{A simpler decision-making model}
Now we consider a simpler version of the decision-making model with three participants Xie, Yu, and Zhang, who are going to vote on a particular proposal (see Fig.~\ref{fig_exampledecisionmaking_1}). The variables X, Y, and Z respectively denote their choices, with each value either $0$ (veto) or $1$ (vote).
\begin{figure}[htbp]
\centering
\includegraphics[width=0.15\textwidth]{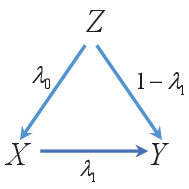}
\caption{A simplified decision-making model. $\lambda_0$ is the influence strength from $Z$ to $X$ and $\lambda_1$ parametrises the influence from $X$ to $Y$.}
\label{fig_exampledecisionmaking_1}\hspace{-2mm}
\end{figure}
Zhang makes an unbiased choice $Z=z$ with an equal probability $p(z)=1/2$.
Xie usually seeks advice from Zhang, and let $\lambda_0$ denote Zhang's influence strength on Xie's choice. $\lambda_0=1$ means a complete influence and Xie's choice is always the same as Zhang's ($x=z$); $\lambda_0=0$ means Xie's choice is independent of Zhang's; and $\lambda_0=-1$ means Xie's choice is always different from Zhang's. The dependence of Xie's choice on Zhang's is given by the conditional probability
\begin{equation}
p(x|z) =\left\{
\begin{aligned}
\frac{1+\lambda_0}{2}  &\;\;\; (\textrm{if }x=z) \\
\frac{1-\lambda_0}{2}  &\;\;\; (\textrm{if }x=1-z) .
\end{aligned}
\right.
\end{equation}
Yu's choice depends both on Zhang's via a strength $1-\lambda_1$ and on Xie's via a strength $\lambda_1$ in the following way.
If Zhang and Xie's choices are the same ($x=z$), then Yu also chooses the same value; otherwise ($x\neq z$) he chooses $y=x$ with probability $\lambda_1$ and $y=z$ with probability $1- \lambda_1$.
Therefore, one can easily write down the conditional probability $p(y|x,z)$. From $p(x,y,z)=p(y|x,z)p(z)$ the joint probability distribution $p(x,y,z)$ can be easily obtained
\begin{equation}
p(x,y,z) =\left\{
\begin{array}{l}
\frac{1+\lambda_0}{4}  \;\;\;\;\; (\textrm{if }x=y=z) \\
\;\; 0 \;\;\;\;\;\;\;\;\; (\textrm{if }x=z, \; y=1-z) \\
\frac{(1-\lambda_0)\lambda_1}{4}  \;  (\textrm{if }x=1-z, \; y=1-z) \\
\frac{(1-\lambda_0)(1-\lambda_1)}{4}  \; (\textrm{if }x=1-z, \; y=z) .
\end{array}
\right.
\end{equation}
From the joint probability distribution $p(y,x,z)$ we can analyze all correlation measures between the variables. In particular, we focus on direct correlation from X to Y.

This decision-making model has a pre-assumed influence structure: $X$ can only be influenced by $Z$ while $Y$ can be influenced by both $X$ and $Z$.
In particular, $\lambda_0$ represents the influence strength from Z to X. The case $\lambda_0=0$ indicates that Z has no influence on X (i.e., X is independent of Z), the case $\lambda_0=\pm 1$ indicates that Z has full influence on X with $X=Z$ ($\lambda_0= 1$) or $X=1-Z$ ($\lambda_0= -1$). Here $\lambda_1$ represents the direct influence strength from X to Y. The case $\lambda_1 =0$ means that X has no influence on Y (while Z has full influence on Y), and the case $\lambda_1 =1$ means that X has full influence on Y (while Z has no influence on Y).
The different measures of direct correlation discussed have different values in general, but each gives a faithful measure of the direct correlation.
We fix $\lambda_0$ and plot each direct-correlation measure as a function of $\lambda_1$ in Fig.~\ref{fig_DMSmeasdirclambda0fixed}. Every measure is monotone non-decreasing in $\lambda_1$, providing a sanity check on the response of the regularised family to increasing direct dependence. Compared to NACE and RACE, $\hat{\mathcal{C}}^D_{\mathrm{rcmi}}$ and $\hat{\mathcal{C}}^D_{\mathrm{ricmi}}$ are more sensitive at small $\lambda_1$ and saturate more gently as $\lambda_1\to 1$. This follows directly from the definitions: NACE and RACE depend only on the most discriminating pair of $X$-values, whereas $\hat{\mathcal{C}}^D_{\mathrm{rcmi}}$ and $\hat{\mathcal{C}}^D_{\mathrm{ricmi}}$ integrate over all values of $X$ and $Y$ and are therefore smoother functions of the model parameters.

\begin{figure}[htbp]
\centering
\includegraphics[width=0.4\textwidth]{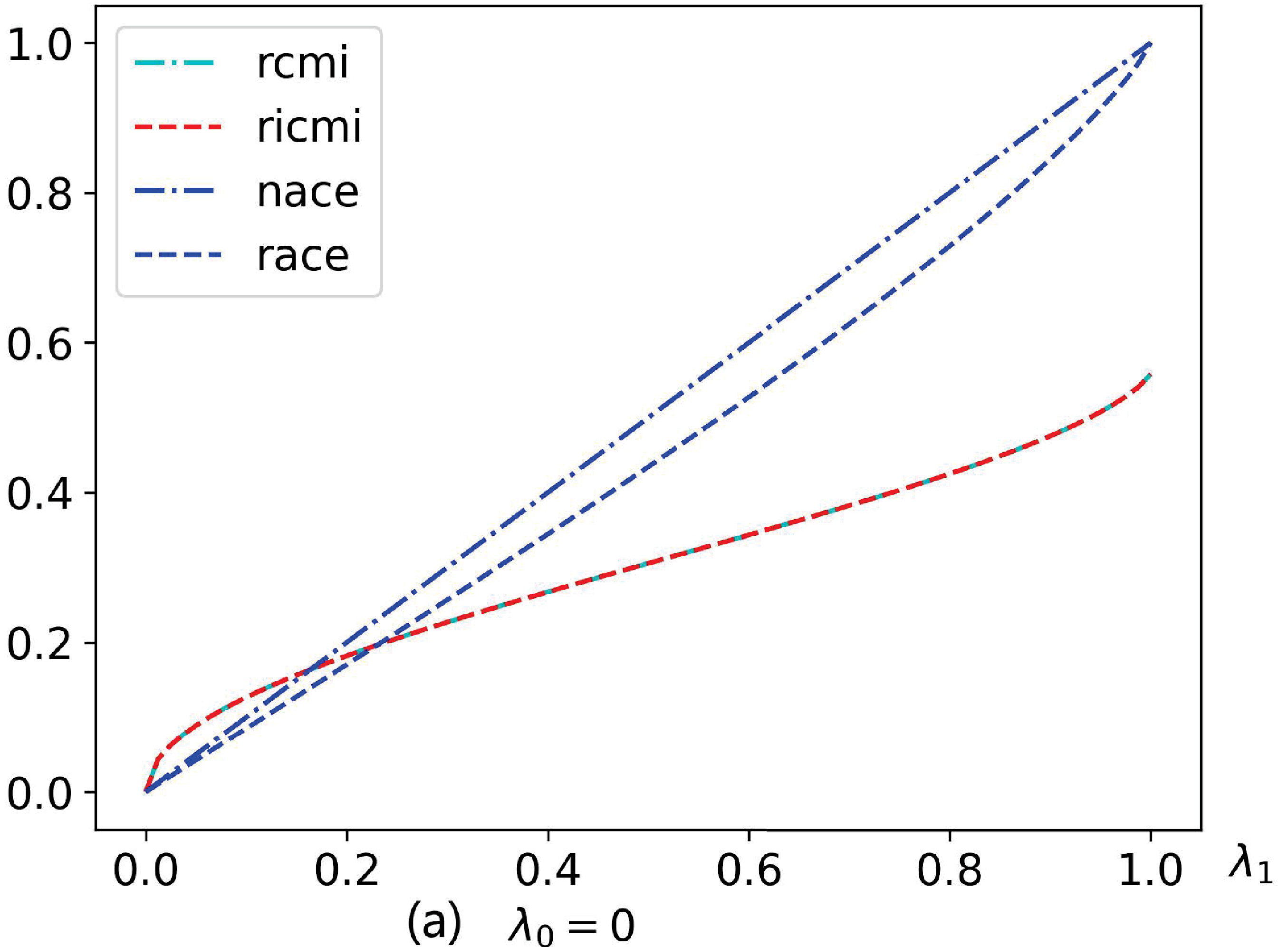}
\includegraphics[width=0.4\textwidth]{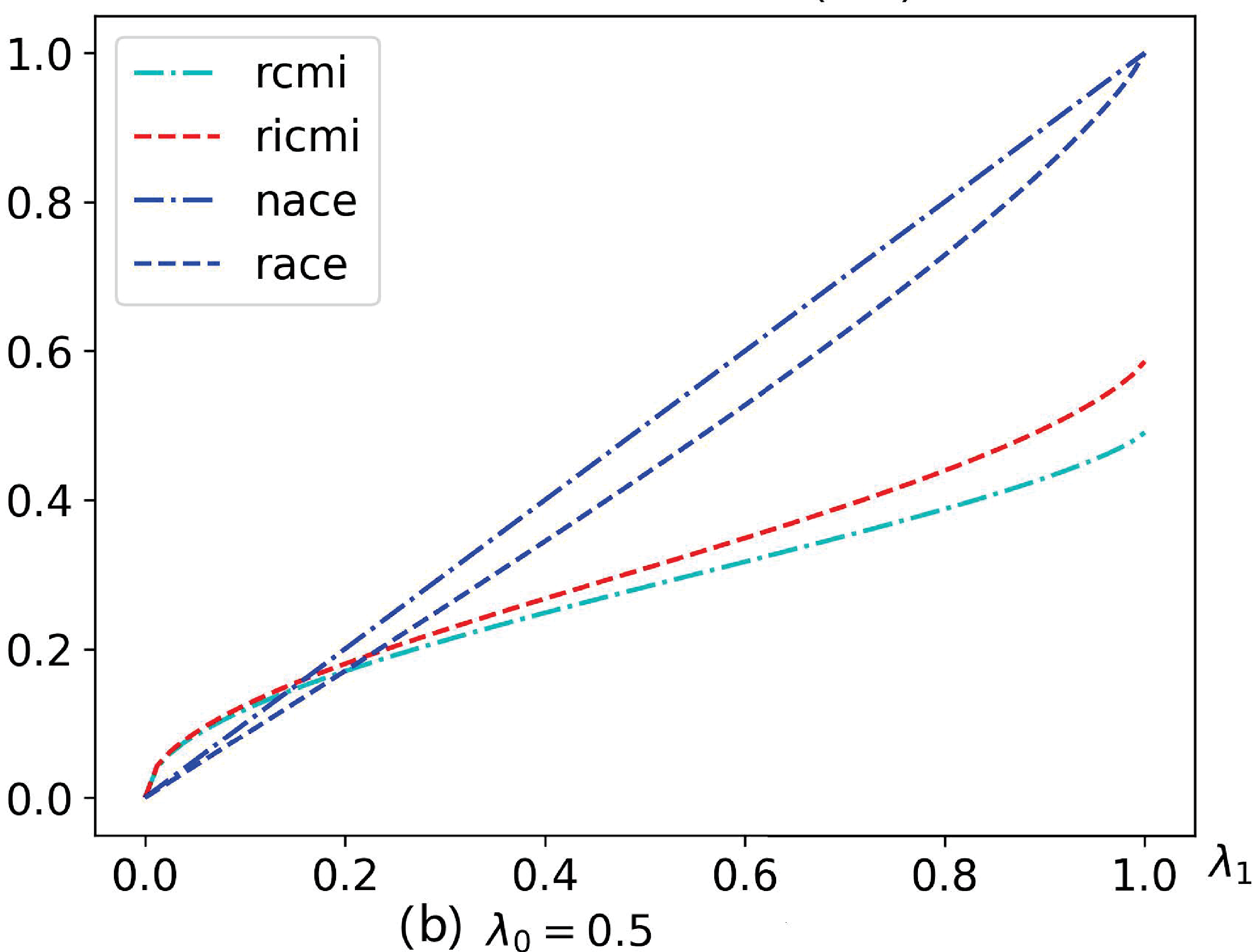}
\includegraphics[width=0.4\textwidth]{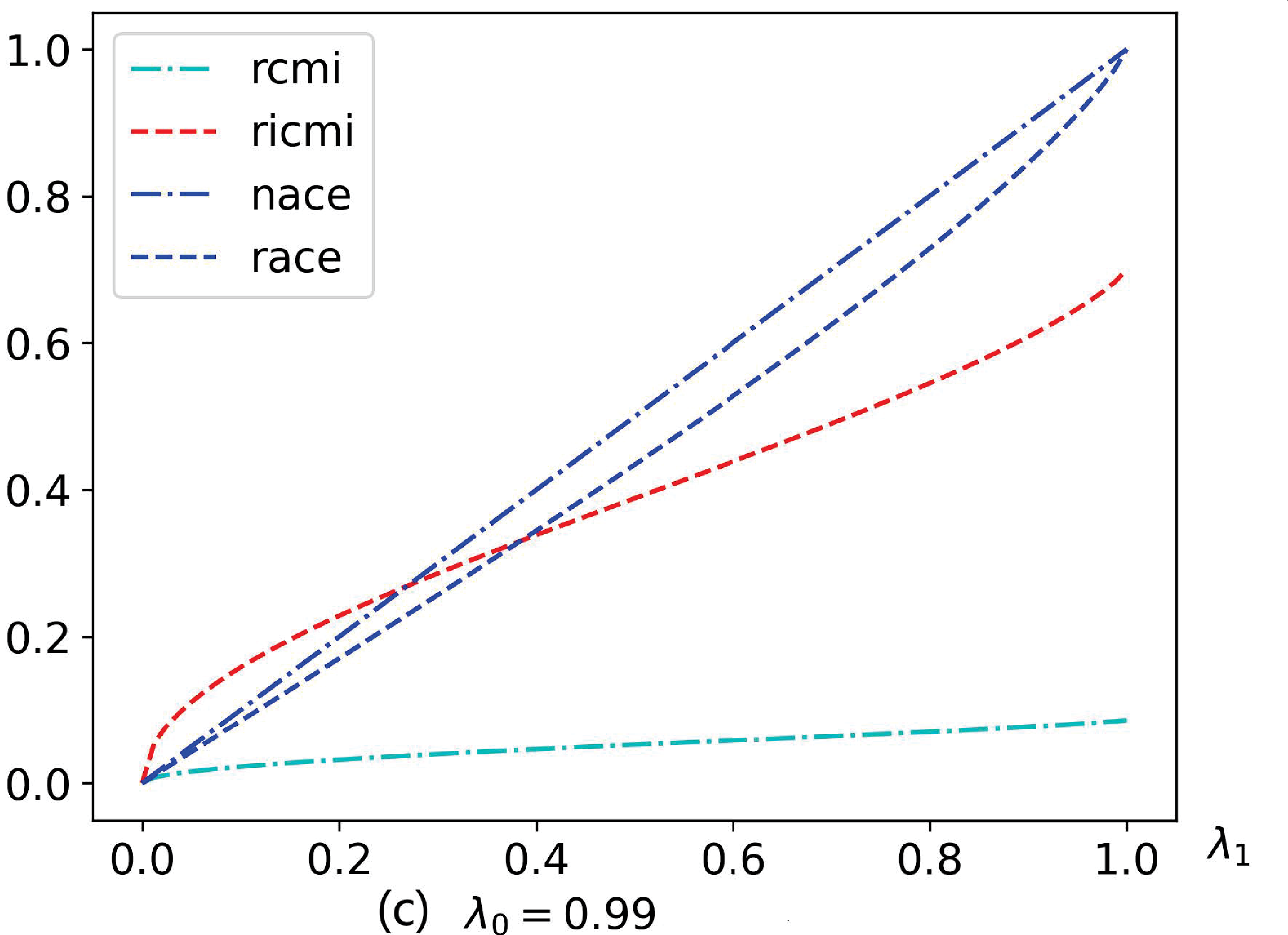}
\caption{Direct-correlation measures in the simplified decision-making model, plotted as functions of $\lambda_1$ at three fixed values of $\lambda_0$: (a)~$\lambda_0=0$, (b)~$\lambda_0=0.5$, (c)~$\lambda_0=0.99$.}
\label{fig_DMSmeasdirclambda0fixed}\hspace{-2mm}
\end{figure}

\section{Applications to real-world data}
\label{sec:applications}

We now apply the measures to three publicly available datasets chosen to span a range of direct-correlation magnitudes: the Titanic 1912 passenger record~\cite{titanic_kaggle}, the UCI Adult (Census Income) dataset~\cite{ucimladult}, and the Bickel \emph{et al.}\ Berkeley 1973 admissions data~\cite{bickel1975sex}. All point estimates are accompanied by bootstrap~\cite{Efron1979bootstrap} $95\%$ confidence intervals (CI) obtained from $B=1000$ resamples of the raw observation-level data. For every regularized direct-correlation measure we additionally report the \emph{achievable upper bound}, obtained by exhaustively enumerating the deterministic couplings $Y=f(X,Z)$ compatible with the observed marginal $p(x,z)$ and keeping the maximum (see Sec.~\ref{sec:upperbound}). Rare zero cells are treated with strategy~(b) of Sec.~\ref{sec:sparse}; for the three datasets used here, all cells of the empirical joint distribution are non-empty, so the singularity strategy is not activated in any of the reported numerical values.

\subsection{Dataset I: Titanic passenger survival (1912)}
\label{sec:titanic}

The Titanic training split~\cite{titanic_kaggle}, widely used as a pedagogical benchmark in statistics and machine learning, records $891$ passengers, of whom $342$ survived ($38.4\%$). We take $X$ as passenger class (Pclass $\in\{1,2,3\}$), $Y$ as survival (0 or 1), and $Z$ as sex (0=female, 1=male). Sex is a strong predictor of survival (``women and children first''), and passenger class is associated with survival through a combination of physical factors (cabin location, access to lifeboats), crew allocation, and other class-correlated social factors; disentangling the direct class--survival association from the sex--survival association is the quantitative question addressed below. The passenger data is given in Table~\ref{tab:titanic_counts}.

\begin{table}[htbp]
\centering
\caption{Titanic passenger data: survival counts (survivors/total) cross-classified by class and sex. Survival rate in parentheses.}
\label{tab:titanic_counts}
\begin{tabular}{c|cc|c}
\hline
Pclass & Female & Male & Aggregate \\
\hline
1 & $91/94$ $(96.8\%)$   & $45/122$ $(36.9\%)$ & $136/216$ \\
2 & $70/76$ $(92.1\%)$   & $17/108$ $(15.7\%)$ & $87/184$ \\
3 & $72/144$ $(50.0\%)$  & $47/347$ $(13.5\%)$ & $119/491$ \\
\hline
All & $233/314$ $(74.2\%)$ & $109/577$ $(18.9\%)$ & $342/891$ \\
\hline
\end{tabular}
\end{table}

The marginal Pclass--Survival correlation is strong ($\mathcal{C}_{\mathrm{pcc}}=-0.339$, $\hat{\mathcal{C}}_{\mathrm{rmi}}=0.146$). The partial correlation after conditioning on sex is $\mathcal{C}^D_{\mathrm{pc}}=-0.321$, essentially unchanged, indicating that sex does not explain away the class effect. All nonlinear direct-correlation measures sit in the $0.12$--$0.22$ range (Table~\ref{tab:realmeasures}, left block). Interestingly, $\hat{\mathcal{C}}^D_{\mathrm{rcmi}}=0.160$ \emph{exceeds} $\hat{\mathcal{C}}_{\mathrm{rmi}}=0.146$; a plausible reason is that sex is itself strongly associated with survival, and stratifying on sex removes part of the total correlation that is driven by the sex--survival path while leaving the direct class--survival effect intact. The same ordering shows up in $\hat{\mathcal{C}}^D_{\mathrm{rpmi}}=0.195$ and $\hat{\mathcal{C}}^D_{\mathrm{ricmi}}=0.190$. The two directional components of the regularised ICMI differ sizeably: $\hat{\mathcal{C}}^D_{\mathrm{ricmi}}(X\leftarrow Y)=0.221$ exceeds $\hat{\mathcal{C}}^D_{\mathrm{ricmi}}(X\to Y)=0.159$ by roughly $40\%$. This asymmetry is a property of the ICMI construction: the two reconstructions are normalised with respect to different marginal distributions, so even in the absence of any meaningful causal asymmetry the two one-way values generally differ; the achievable upper bounds in Table~\ref{tab:realmeasures} (0.326 vs.\ 0.252) show that the $X\leftarrow Y$ direction simply has more room to grow for this dataset. Finally, $\mathcal{C}^D_{\mathrm{nace}}=0.316$ and $\hat{\mathcal{C}}^D_{\mathrm{race}}=0.275$, the largest among the direct-correlation measures: both capture the contrast between the intervened survival probabilities $p(Y{=}1|\mathrm{do}(\mathrm{Pclass}{=}1))\approx 0.58$ and $p(Y{=}1|\mathrm{do}(\mathrm{Pclass}{=}3))\approx 0.26$ (computed under strategy (b) using the observed $p(\mathrm{sex})$).

\subsection{Dataset II: UCI Adult (Census Income)}
\label{sec:adult}

The UCI Adult dataset~\cite{ucimladult} contains $n=32561$ records extracted from the 1994 US Census, with income dichotomised at \$50\,000/year. We take $X$ as the education level binned into four ordinal groups (0: at most 12th grade, 1: high-school graduate or some college, 2: Associate's or Bachelor's, 3: Master's, professional or doctoral), $Y$ as the binary indicator of income above \$50\,000, and $Z$ as sex (0=female, 1=male). The marginal proportions of $X$ are $0.132$, $0.540$, $0.244$, $0.084$; all $4\times 2\times 2 = 16$ cells of the empirical joint distribution are non-empty (minimum cell count $23$), so no singularity strategy is invoked.

The direct correlation between education and income is strong, both marginally ($\mathcal{C}_{\mathrm{pcc}}=+0.346$, $\hat{\mathcal{C}}_{\mathrm{rmi}}=0.148$) and conditionally ($\mathcal{C}^D_{\mathrm{pc}}=+0.349$, $\hat{\mathcal{C}}^D_{\mathrm{rcmi}}=0.149$). This is a regime in which the marginal and the direct correlation agree: education exerts a genuine direct effect on income and the sex confounder, though non-trivial, does not substantially bias the marginal picture. The very narrow bootstrap confidence intervals (widths $\lesssim 0.01$, thanks to the large sample size) make this dataset a useful stability test for the measures. Beyond the linear/entropic measures, the pairwise measures NACE $=0.544$ and RACE $=0.521$ are very large: the highest and lowest education buckets differ sharply in their intervened income distributions, and both NACE and RACE are by construction sensitive to the most discriminating pair of $X$ values.

\subsection{Dataset III: Berkeley 1973 admissions --- weak direct correlation}
\label{sec:berkeley}

As a counterpoint to the first two datasets, we revisit the Bickel \emph{et al.}~\cite{bickel1975sex} 1973 UC Berkeley admissions data ($n=4526$, six largest graduate departments) discussed in Sec.~I. We take $X$ as sex (0=F, 1=M), $Y$ as admission (0=rejected, 1=admitted), and $Z$ as department ($6$ values). The contingency is tabulated in Table~\ref{tab:berkeley_counts}.

\begin{table}[htbp]
\centering
\caption{Bickel \emph{et al.}~\cite{bickel1975sex} cross-tabulation of the 1973 Berkeley admissions by sex and department; admit rate (\%) in parentheses.}
\label{tab:berkeley_counts}
\begin{tabular}{c|cc|c}
\hline
Dept & Male & Female & Aggregate \\
\hline
A     & $512/825$\ $(62.1)$ & $89/108$\ $(82.4)$  & $601/933$ \\
B     & $353/560$\ $(63.0)$ & $17/25$\ $(68.0)$   & $370/585$ \\
C     & $120/325$\ $(36.9)$ & $202/593$\ $(34.1)$ & $322/918$ \\
D     & $138/417$\ $(33.1)$ & $131/375$\ $(34.9)$ & $269/792$ \\
E     & $53/191$\ $(27.7)$  & $94/393$\ $(23.9)$  & $147/584$ \\
F     & $22/373$\ $(5.9)$   & $24/341$\ $(7.0)$   & $46/714$ \\
\hline
Total & $1198/2691$\ $(44.5)$ & $557/1835$\ $(30.4)$ & $1755/4526$ \\
\hline
\end{tabular}
\end{table}

The marginal admit rate favours male applicants by $14$ points, whereas within every department the rates are close or even favour women in four of the six departments. The marginal correlation is therefore non-negligible ($\mathcal{C}_{\mathrm{pcc}}=+0.143$, $\hat{\mathcal{C}}_{\mathrm{rmi}}=0.061$), while every direct-correlation measure drops to the $0.02$--$0.05$ range (Table~\ref{tab:realmeasures}, right block). In particular, $\hat{\mathcal{C}}^D_{\mathrm{rcmi}}=0.030$ is less than half of $\hat{\mathcal{C}}_{\mathrm{rmi}}$, and $\hat{\mathcal{C}}^D_{\mathrm{rmi,do}}=0.018$ less than one third. This quantifies, on the regularised scale, the classical qualitative observation that most of the marginal sex--admission correlation is inherited from the department, not from a direct sex effect. The two one-way ICMI components differ ($\hat{\mathcal{C}}^D_{\mathrm{ricmi}}(X\to Y)=0.053$ vs.\ $\hat{\mathcal{C}}^D_{\mathrm{ricmi}}(X\leftarrow Y)=0.037$); this asymmetry is a property of the two-sided ICMI construction (different marginals enter the two reconstructions) and should not be over-interpreted as a causal-direction finding, especially given that sex is an exogenous pre-treatment variable in this dataset.

\begin{table*}[htbp]
\centering
\caption{Correlation and direct-correlation measures on the three real datasets. For every regularized direct-correlation measure we report the point estimate together with a bootstrap $95\%$ confidence interval (in square brackets) and the achievable upper bound $\hat{\mathcal{C}}^{\max}$ obtained by enumerating all deterministic couplings $Y=f(X,Z)$ compatible with the observed $p(x,z)$. Values below the thick horizontal line are direct-correlation measures. CIs computed from $B=1000$ bootstrap resamples (seed 20260419).}
\label{tab:realmeasures}
\begin{tabular}{l|lll|lll|lll}
\hline
 & \multicolumn{3}{c|}{Titanic ($n=891$)} & \multicolumn{3}{c|}{Adult ($n=32561$)} & \multicolumn{3}{c}{Berkeley ($n=4526$)} \\
Measure & value & 95\% CI & $\hat{\mathcal{C}}^{\max}$ & value & 95\% CI & $\hat{\mathcal{C}}^{\max}$ & value & 95\% CI & $\hat{\mathcal{C}}^{\max}$ \\
\hline
$\mathcal{C}_{\mathrm{pcc}}$                    & $-0.339$ & $[-0.40,-0.28]$ & ---   & $+0.346$ & $[+0.336,+0.356]$ & ---   & $+0.143$ & $[+0.113,+0.172]$ & ---   \\
$\mathcal{C}^D_{\mathrm{pc}}$                   & $-0.321$ & $[-0.38,-0.26]$ & ---   & $+0.349$ & $[+0.338,+0.359]$ & ---   & ---      & ---               & ---   \\
$\hat{\mathcal{C}}_{\mathrm{rmi}}$              & $0.146$  & $[0.119,0.174]$ & $0.555$ & $0.148$  & $[0.143,0.152]$  & $0.556$ & $0.061$  & $[0.048,0.074]$  & $0.549$ \\
\hline
$\hat{\mathcal{C}}^D_{\mathrm{rcmi}}$           & $0.160$  & $[0.129,0.186]$ & $0.246$ & $0.149$  & $[0.144,0.154]$  & $0.247$ & $0.030$  & $[0.021,0.046]$  & $0.222$ \\
$\hat{\mathcal{C}}^D_{\mathrm{rpmi}}$           & $0.195$  & $[0.165,0.224]$ & $0.206$ & $0.152$  & $[0.148,0.157]$  & $0.190$ & $0.042$  & $[0.030,0.064]$  & $0.277$ \\
$\hat{\mathcal{C}}^D_{\mathrm{ricmi}}(X{\to}Y)$ & $0.159$  & $[0.130,0.186]$ & $0.252$ & $0.148$  & $[0.143,0.152]$  & $0.248$ & $0.053$  & $[0.037,0.082]$  & $0.304$ \\
$\hat{\mathcal{C}}^D_{\mathrm{ricmi}}(X{\leftarrow}Y)$ & $0.221$  & $[0.173,0.257]$ & $0.326$ & $0.156$  & $[0.151,0.162]$  & $0.392$ & $0.037$  & $[0.026,0.059]$  & $0.343$ \\
$\hat{\mathcal{C}}^D_{\mathrm{ricmi}}$          & $0.190$  & $[0.154,0.221]$ & $0.269$ & $0.152$  & $[0.147,0.157]$  & $0.284$ & $0.045$  & $[0.031,0.070]$  & $0.310$ \\
$\mathcal{C}^D_{\mathrm{nace}}$                 & $0.316$  & $[0.247,0.383]$ & $1.000$ & $0.544$  & $[0.525,0.562]$  & $1.000$ & $0.043$  & $[0.008,0.078]$  & $1.000$ \\
$\hat{\mathcal{C}}^D_{\mathrm{race}}$           & $0.275$  & $[0.215,0.332]$ & $1.000$ & $0.521$  & $[0.505,0.538]$  & $1.000$ & $0.037$  & $[0.007,0.067]$  & $1.000$ \\
$\hat{\mathcal{C}}^D_{\mathrm{rmi,do}}$         & $0.116$  & $[0.091,0.141]$ & $0.555$ & $0.144$  & $[0.139,0.148]$  & $0.556$ & $0.018$  & $[0.003,0.033]$  & $0.549$ \\
\hline
\end{tabular}
\end{table*}

\subsection{Cross-dataset comparison and complementary strengths}
\label{sec:crossdataset}

Figure~\ref{fig_realdata} summarises the seven regularized direct-correlation measures of Table~\ref{tab:realmeasures} across the three datasets, with the bootstrap CI as error bar and the achievable upper bound as a dashed cap. The complementary behaviour of the measures becomes apparent:
\begin{itemize}
\item $\hat{\mathcal{C}}^D_{\mathrm{rcmi}}$ is the smoothest and most conservative removal-of-direct-correlation measure: it integrates the discrepancy between $p(x,y,z)$ and $p(x|z)p(y|z)p(z)$ and therefore reports the overall magnitude of the direct correlation.
\item $\hat{\mathcal{C}}^D_{\mathrm{rpmi}}$ and $\hat{\mathcal{C}}^D_{\mathrm{ricmi}}$ are larger than $\hat{\mathcal{C}}^D_{\mathrm{rcmi}}$ on Titanic (where the PMI/ICMI factorisation is more aggressive) but comparable on Adult (where the confounder is weak relative to the direct effect). The directional $\hat{\mathcal{C}}^D_{\mathrm{ricmi}}$ components additionally expose the normalisation asymmetry discussed in Sec.~\ref{sec:titanic}.
    This ordering is consistent with the alphabet-size maxima of Section~\ref{sec:alphabetmax}: the rPMI and rICMI operate on intrinsically wider scales than the rCMI.
\item $\mathcal{C}^D_{\mathrm{nace}}$ and $\hat{\mathcal{C}}^D_{\mathrm{race}}$ saturate to very large values on Adult, where a specific contrast (highest vs.\ lowest education bucket) drives most of the effect. They are smaller than the other direct measures on Berkeley, where no single pair of $X$ values dominates.
\item $\hat{\mathcal{C}}^D_{\mathrm{rmi,do}}$ is the most conservative of all measures: it applies the mutual-information construction to the intervened joint $p_{\mathrm{do}}(x,y)$ and quantifies the $X$--$Y$ correlation that survives an intervention on $X$.
\end{itemize}
In every case the distance of the observed value from its achievable upper bound (dashed cap in Fig.~\ref{fig_realdata}) is the honest indicator of ``how correlated is correlated''; the ratio ${\hat{\mathcal{C}}}/{\hat{\mathcal{C}}^{\max}}$ is $\sim 65\%$ on Titanic, $\sim 60\%$ on Adult and $\sim 14\%$ on Berkeley for $\hat{\mathcal{C}}^D_{\mathrm{rcmi}}$, giving a quantitative confirmation of the qualitative comparison in the three subsections above.

\begin{figure*}[htbp]
\centering
\includegraphics[width=0.95\textwidth]{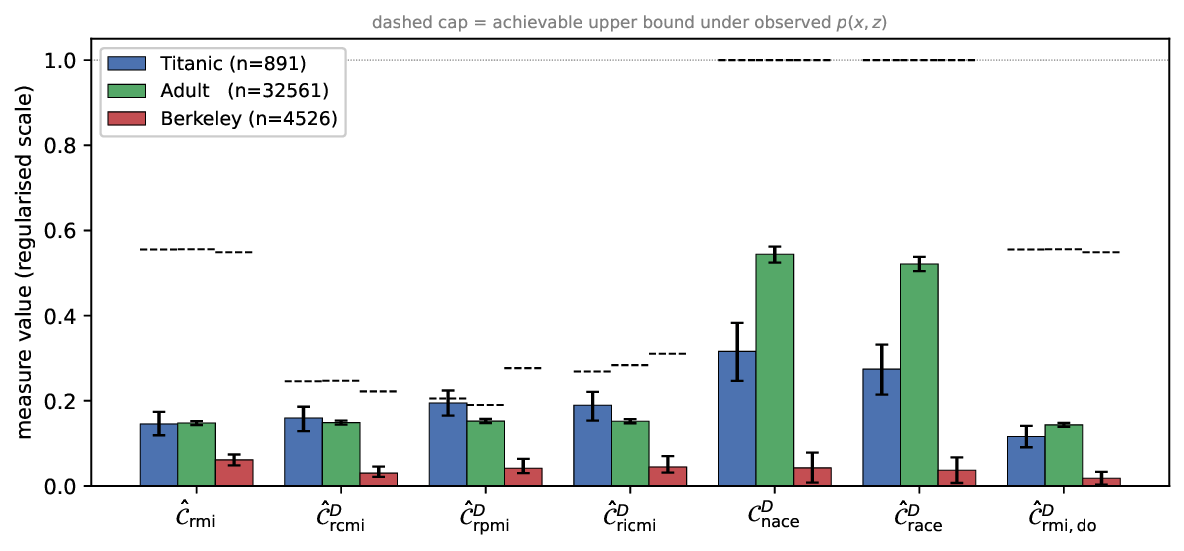}
\caption{Regularized direct-correlation measures on Titanic ($n=891$), Adult ($n=32561$) and Berkeley ($n=4526$). Error bars are bootstrap $95\%$ confidence intervals from $B=1000$ resamples (seed 20260419); dashed caps are the achievable upper bounds $\hat{\mathcal{C}}^{\max}$ obtained by enumerating all deterministic couplings $Y=f(X,Z)$ that preserve the observed $p(x,z)$. The dotted grey line marks the trivial upper bound $1$.}
\label{fig_realdata}
\end{figure*}

\section{Conclusion}

For a joint distribution $p(x,y,z)$ we have organised the existing measures of the direct correlation between $X$ and $Y$ into two families: the first \emph{removes} the direct correlation from the distribution and measures the induced shift (CMI, PMI, ICMI), while the second relies on Pearl's do-calculus and measures the response of the distribution of $Y$ to an intervention on $X$ (ACE, NACE, do-based mutual information). For every Kullback-Leibler-based member of either family we have introduced a Jensen-Shannon-based regularised analogue; the resulting measures are bounded by $1$, free of the singularities that plague sparse-data applications of the KL divergence, and equipped with an explicit achievable upper bound that depends on the observed marginals and the alphabet sizes. The JS metrization itself is not new~\cite{Lin1991,endres2003metric}; our contribution is to organise the existing direct-correlation measures into a single coherent family built on $\sqrt{D_{JS}}$, to clarify the relationships between their upper bounds, and to compare them empirically with bootstrap confidence intervals.
In particular, we have derived the maximal value of each regularized measure when only the alphabet sizes are fixed: the regularized CMI and the do-based regularized mutual information share the maximum $\sqrt{\varphi(1/\min(d_X,d_Y))}$ of the regularized mutual information, the one-way regularized ICMI is capped by $\sqrt{1-1/d_X}$, and the regularized PMI reaches $\sqrt{\varphi(1/(d_Xd_Y))}$; these maxima provide the natural scale against which any reported value should be read, and they expose the boundary configurations near which the PMI- and ICMI-type measures become sensitive to the sparse-data strategy.
On a decision-making toy model and on three public real datasets (Titanic 1912, UCI Adult 1994, and UC Berkeley 1973) the regularised measures behave consistently: they are substantial on the two datasets with a strong direct effect and markedly below the regularised total MI on the Berkeley data, where most of the marginal correlation is inherited from the department confounder. The choice among the measures is determined by the aspect of the direct correlation one wishes to emphasise --- overall magnitude ($\hat{\mathcal{C}}^D_{\mathrm{rcmi}}$), direction-resolved magnitude ($\hat{\mathcal{C}}^D_{\mathrm{ricmi}}$), worst-case pairwise contrast ($\mathcal{C}^D_{\mathrm{nace}}$, $\hat{\mathcal{C}}^D_{\mathrm{race}}$), or post-intervention correlation ($\hat{\mathcal{C}}^D_{\mathrm{rmi,do}}$) --- and the achievable upper bound provides the correct scale against which each observed value should be read.

\section*{Data availability}
All three real datasets used in Sec.~\ref{sec:applications} are publicly available. The Titanic passenger record~\cite{titanic_kaggle} is mirrored at \url{https://raw.githubusercontent.com/datasciencedojo/datasets/master/titanic.csv}. The UCI Adult (Census Income) dataset~\cite{ucimladult} is available at \url{https://archive.ics.uci.edu/ml/machine-learning-databases/adult/adult.data}. The Bickel~\emph{et al.}\ 1973 Berkeley admissions table~\cite{bickel1975sex} is bundled with the R \texttt{datasets} package (\texttt{UCBAdmissions}); a CSV mirror is linked in the bibliography entry for~\cite{bickel1975sex} and the full cross-tabulation is reproduced in Table~\ref{tab:berkeley_counts}. All numerical values in Sec.~\ref{sec:applications} (including bootstrap $95\%$ CIs and achievable upper bounds) are reproduced exactly from these three sources by the Python analysis script accompanying the manuscript,
and the maxima of Section~\ref{sec:alphabetmax} are reproduced by the global-optimization script described in the Supplementary Material.

\section*{appendix}

\subsection{Proofs and constructions for Section~\ref{sec:alphabetmax}}
\label{app:proofs}

Throughout the appendix all distributions are finite, all logarithms are base $2$, and $\varphi$ is the function (\ref{defphi}). We use two elementary facts about the Jensen--Shannon divergence
$D_{JS}(P\|Q)=H\!\left(\tfrac{P+Q}{2}\right)-\tfrac12\bigl(H(P)+H(Q)\bigr)$: it is jointly convex in the pair $(P,Q)$, and it is bounded above by $1$~bit, with the value $1$ attained exactly when $P$ and $Q$ have disjoint supports \citep{Lin1991,endres2003metric}. Two pieces of terminology recur below. A \emph{stratum} is the slice of a three-variable distribution at a fixed value $Z=z$, that is, the conditional $p(\cdot,\cdot\,|z)$ together with its weight $p(z)$; a \emph{cell} is a single outcome $(x,y)$ of the pair $(X,Y)$. Thus a stratum is in general spread over many cells, and the special strata used in the constructions below are those whose conditional is concentrated on one cell.

\begin{lemma}[stratification]
\label{lem:strat}
Let $p(x,y,z)=p(z)p_z(x,y)$ and $q(x,y,z)=p(z)q_z(x,y)$ share the same $Z$-marginal. Then
$D_{JS}(p\|q)=\sum_z p(z)\, D_{JS}(p_z\|q_z)$.
\end{lemma}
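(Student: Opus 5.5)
The plan is to expand $D_{JS}$ through its entropy form and use the stratified structure of $p$ and $q$. The mean distribution is $m(x,y,z)=\tfrac12(p+q)=p(z)\,m_z(x,y)$ with $m_z=\tfrac12(p_z+q_z)$. This holds because both $p$ and $q$ factor through the same weight $p(z)$. So all three distributions $p,q,m$ have the common $Z$-marginal $p(z)$, and their conditionals given $Z=z$ are $p_z,q_z,m_z$ respectively. Strata with $p(z)=0$ contribute nothing to any of the sums and can be discarded at the outset.

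The next step applies the chain rule for Shannon entropy to each of the three distributions. For any $r(x,y,z)=p(z)r_z(x,y)$ one has
\begin{equation}
H(r)=H(p(z))+\sum_z p(z)\,H(r_z).
\end{equation}
This follows directly by expanding $-\sum r\log_2 r$ with $\log_2 r=\log_2 p(z)+\log_2 r_z(x,y)$ and using $\sum_{x,y}r_z(x,y)=1$.

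Substituting into $D_{JS}(p\|q)=H(m)-\tfrac12\bigl(H(p)+H(q)\bigr)$, the marginal term $H(p(z))$ appears with total coefficient $1-\tfrac12-\tfrac12=0$ and cancels. What remains is
\begin{equation}
\sum_z p(z)\Bigl[H(m_z)-\tfrac12\bigl(H(p_z)+H(q_z)\bigr)\Bigr]=\sum_z p(z)\,D_{JS}(p_z\|q_z),
\end{equation}
which is the claim.

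Alternatively, I could work from the KL form, $D_{KL}(p\|m)=\sum_z p(z)D_{KL}(p_z\|m_z)$, since the $p(z)$ factors cancel inside the logarithm, and likewise for $q$. Averaging the two identities gives the same result.

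There is no real obstacle here. The only points that need care are the conventions $0\log 0=0$ and the observation that $m_z$ is well defined because the weights coincide. That coincidence is essential: if the $Z$-marginals of $p$ and $q$ differed, the cross term would not cancel.
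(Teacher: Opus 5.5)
Your proposal is correct and follows essentially the same route as the paper's proof. Both observe that the midpoint $m$ inherits the common $Z$-marginal with conditional $m_z=\tfrac12(p_z+q_z)$, apply the chain rule $H(r)=H(Z)+\sum_z p(z)H(r_z)$ to $p$, $q$ and $m$, and cancel the shared $H(Z)$ term in the entropy form of $D_{JS}$.
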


\textbf{Proof.}
For any family of conditionals $\{r_z\}$ on $(X,Y)$, the joint distribution $r(x,y,z)=p(z)\,r_z(x,y)$ obeys the entropy chain rule
\begin{eqnarray*}
H(r)=H(Z)+\sum_z p(z)\,H(r_z), \\
H(Z)=-\sum_z p(z)\log_2 p(z),
\end{eqnarray*}
since $H(r)=H(X,Y,Z)=H(Z)+H(X,Y\mid Z)$ and the conditional entropy is $H(X,Y\mid Z)=\sum_z p(z)\,H(r_z)$. The midpoint $m=\tfrac12(p+q)$ has the same $Z$-marginal $p(z)$, with conditional $m_z=\tfrac12(p_z+q_z)$; hence the chain rule applies to $p$, $q$ and $m$ alike, with one and the same term $H(Z)$. Forming the divergence and cancelling that common term,
\begin{eqnarray*}
D_{JS}(p\|q)&=&H(m)-\tfrac12\bigl(H(p)+H(q)\bigr) \\
&=&\sum_z p(z)\Bigl[H(m_z)-\tfrac12\bigl(H(p_z)+H(q_z)\bigr)\Bigr] \\
&=&\sum_z p(z)\,D_{JS}(p_z\|q_z).
\end{eqnarray*}
In other words: when two distributions are built on a shared mixing variable $Z$, their JS divergence is the $p(z)$-average of the within-stratum JS divergences.

\begin{lemma}[overlap]
\label{lem:overlap}
Let $P$ be a probability distribution and $Q=cP+(1-c)P^{\perp}$ with $c\in[0,1]$ and
$\mathrm{supp}\,P^{\perp}\cap\mathrm{supp}\,P=\emptyset$. Then $D_{JS}(P\|Q)=\varphi(c)$,
independently of $P^{\perp}$.
\end{lemma}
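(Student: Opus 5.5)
The plan is a direct computation using the entropy form $D_{JS}(P\|Q)=H(M)-\tfrac12\bigl(H(P)+H(Q)\bigr)$ with $M=\tfrac12(P+Q)$. The disjointness of $\mathrm{supp}\,P$ and $\mathrm{supp}\,P^{\perp}$ makes every entropy split into a part on $S=\mathrm{supp}\,P$ and a part on $S^{\perp}=\mathrm{supp}\,P^{\perp}$. On $S$ we have $Q=cP$ and $M=\tfrac{1+c}{2}P$. On $S^{\perp}$ we have $Q=(1-c)P^{\perp}$ and $M=\tfrac{1-c}{2}P^{\perp}$.

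The first step uses the grouping rule for a distribution of the form $aA+(1-a)B$ with $A,B$ disjointly supported:
\begin{equation*}
H\bigl(aA+(1-a)B\bigr)=H_2(a)+aH(A)+(1-a)H(B).
\end{equation*}
Applying it to $Q$ and to $M$ gives
\begin{eqnarray*}
H(Q)&=&H_2(c)+cH(P)+(1-c)H(P^{\perp}),\\
H(M)&=&H_2\!\left(\tfrac{1+c}{2}\right)+\tfrac{1+c}{2}H(P)+\tfrac{1-c}{2}H(P^{\perp}).
\end{eqnarray*}

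Substituting into the entropy form of $D_{JS}$, the coefficient of $H(P)$ is $\tfrac{1+c}{2}-\tfrac12-\tfrac c2=0$. The coefficient of $H(P^{\perp})$ is $\tfrac{1-c}{2}-\tfrac{1-c}{2}=0$. What remains is $H_2\bigl(\tfrac{1+c}{2}\bigr)-\tfrac12 H_2(c)=\varphi(c)$, and this is independent of $P^{\perp}$.

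The endpoints need separate care. For $c=1$ the term $P^{\perp}$ carries zero weight and may be arbitrary or absent. For $c=0$ the supports of $P$ and $Q$ are disjoint. In both cases the grouping rule still holds with the convention $0\log 0=0$, so the formula applies; as checks, $\varphi(1)=0$ and $\varphi(0)=1$.

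There is no real obstacle. The only point to verify is that the grouping rule is legitimate, which requires exactly the disjoint supports assumed in the statement.
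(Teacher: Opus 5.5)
Your proof is correct and follows the paper's own argument essentially line for line: you apply the grouping rule for disjointly supported mixtures to $Q$ and to the midpoint, then observe that the coefficients of $H(P)$ and $H(P^{\perp})$ cancel, leaving $\varphi(c)$. Your separate check of the endpoints $c=0$ and $c=1$ is a small addition that the paper leaves implicit.
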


\textbf{Proof.}
Both $Q$ and the midpoint $m=\tfrac12(P+Q)$ are mixtures of the two disjointly supported distributions $P$ and $P^{\perp}$:
\begin{eqnarray*}
Q&=&cP+(1-c)P^{\perp}, \\
m&=&\tfrac12 P+\tfrac12 Q=\tfrac{1+c}{2}\,P+\tfrac{1-c}{2}\,P^{\perp}.
\end{eqnarray*}
For a two-component mixture $R=\alpha R_1+(1-\alpha)R_2$ whose parts have disjoint supports, the grouping property of the Shannon entropy gives $H(R)=H_2(\alpha)+\alpha H(R_1)+(1-\alpha)H(R_2)$, where $H_2$ is the binary entropy: the term $H_2(\alpha)$ accounts for which component an outcome falls in, and the remaining two terms for where it lands inside that component. Applying this to $Q$ (weight $\alpha=c$) and to $m$ (weight $\alpha=\tfrac{1+c}{2}$),
\begin{eqnarray*}
H(Q)=H_2(c)+cH(P)+(1-c)H(P^{\perp}),\\
H(m)=H_2\!\Bigl(\tfrac{1+c}{2}\Bigr)+\tfrac{1+c}{2}H(P)+\tfrac{1-c}{2}H(P^{\perp}).
\end{eqnarray*}
Substituting into $D_{JS}(P\|Q)=H(m)-\tfrac12 H(P)-\tfrac12 H(Q)$, the coefficient of $H(P)$ is $\tfrac{1+c}{2}-\tfrac12-\tfrac{c}{2}=0$ and the coefficient of $H(P^{\perp})$ is $\tfrac{1-c}{2}-\tfrac{1-c}{2}=0$. Both shape-dependent terms vanish, leaving
\begin{equation*}
D_{JS}(P\|Q)=H_2\!\Bigl(\tfrac{1+c}{2}\Bigr)-\tfrac12 H_2(c)=\varphi(c),
\end{equation*}
a quantity that sees only the overlap weight $c$ and not the detailed shapes of $P$ or $P^{\perp}$.

The following instance is used twice below. Suppose $P$ is \emph{uniform} on a set $S$ of $k$ cells and $Q$ agrees on $S$ with $cP$ --- that is, $Q$ assigns total mass $c$ to $S$, spread uniformly over its $k$ cells, and the remaining mass $1-c$ to cells outside $S$. Then $Q=cP+(1-c)P^{\perp}$ with $P^{\perp}$ supported off $S$, so Lemma~\ref{lem:overlap} again gives $D_{JS}(P\|Q)=\varphi(c)$. The single-cell case $k=1$ is the elementary one; the case $k\ge2$ describes a uniform ``permutation'' block.

Since $\varphi$ is strictly decreasing (\ref{defphi}), Lemma~\ref{lem:overlap} converts the problem ``maximize $D_{JS}(P\|Q)$'' into the problem ``make the overlap weight $c$ between $P$ and $Q$ as small as the constraints permit.'' We apply this first to bound $J^{*}$ from below.

Take $P=r^{*}$ to be the uniform diagonal $r^{*}(x,y)=\delta_{x,y}/m$ on $m=\min(d_X,d_Y)$ values of each variable. Its support is the $m$ diagonal cells, each of mass $1/m$, and its marginals are uniform, $r^{*}_X(x)=r^{*}_Y(y)=1/m$. The product of the marginals, $Q=r^{*}_X r^{*}_Y$, is then uniform on all $m^2$ cells, with mass $1/m^2$ each. On the diagonal $Q$ carries total mass $m\cdot(1/m^2)=1/m$, so
\begin{equation*}
Q=\tfrac1m\,r^{*}+\Bigl(1-\tfrac1m\Bigr)P^{\perp},
\end{equation*}
with $P^{\perp}$ the uniform distribution on the $m^2-m$ off-diagonal cells, disjoint from the support of $r^{*}$. Lemma~\ref{lem:overlap} with $c=1/m$ gives
\begin{equation*}
D_{JS}\bigl(r^{*}\,\big\|\,r^{*}_X r^{*}_Y\bigr)=\varphi(1/m),
\end{equation*}
which is exactly the square of the bound (\ref{maxrmi}). As $r^{*}$ is one admissible joint distribution on the $d_X\times d_Y$ alphabet, $J^{*}(d_X,d_Y)\ge\varphi(1/m)$.

\textbf{Proof of Proposition~\ref{th:rcmi}.}

\emph{(i) Regularized CMI.} Fix any joint $p(x,y,z)$ and let $q=p(x|z)p(y|z)p(z)$ be its rCMI reconstruction. Both have $Z$-marginal $p(z)$, and the stratum of $q$ at $Z=z$ is the product $p(x|z)p(y|z)=(p_z)_X(p_z)_Y$ of the marginals of the stratum $p_z=p(\cdot,\cdot|z)$ of $p$. Lemma~\ref{lem:strat} therefore gives
\begin{equation*}
D_{JS}(p\|q)=\sum_z p(z)\,D_{JS}\!\bigl(p_z\,\big\|\,(p_z)_X(p_z)_Y\bigr).
\end{equation*}
Each summand compares a two-variable distribution on (at most) $d_X\times d_Y$ outcomes with the product of its own marginals, so by the definition of $J^{*}$ it is at most $J^{*}(d_X,d_Y)$. A convex combination of numbers that are each $\le J^{*}$ is itself $\le J^{*}$, whence $\hat{\mathcal{C}}^{D\,2}_{rcmi}=D_{JS}(p\|q)\le J^{*}$. Equality is reached by $p(x,y,z)=r^{*}(x,y)p(z)$ with $r^{*}$ a maximizer of $J^{*}$ and $p(z)$ any full-support marginal: then every stratum equals $r^{*}$ and every summand equals $J^{*}$. For this same $p$ the joint $(X,Y)$-marginal is $r^{*}$, so the plain regularized mutual information already attains $\hat{\mathcal{C}}^{2}_{rmi}=D_{JS}(r^{*}\|r^{*}_X r^{*}_Y)=J^{*}$; and $\hat{\mathcal{C}}_{rmi}$ can never exceed $\sqrt{J^{*}}$, by the very definition of $J^{*}$ as the maximal two-variable value. Hence $\max_p\hat{\mathcal{C}}_{rmi}=\max_p\hat{\mathcal{C}}^{D}_{rcmi}=\sqrt{J^{*}}$.

\emph{(ii) Do-based regularized MI.} For an arbitrary $p$, the intervened joint is
\begin{equation*}
p_{do}(x,y)=p(x)\,p(y\,|\,do(x))=p(x)\sum_z p(y|x,z)\,p(z).
\end{equation*}
It is a genuine probability distribution on $d_X\times d_Y$, and its $X$-marginal is $\sum_y p_{do}(x,y)=p(x)\sum_y p(y|do(x))=p(x)$. Being one particular two-variable joint on the $d_X\times d_Y$ alphabet, it satisfies $\hat{\mathcal{C}}^{D\,2}_{rmi,do}=D_{JS}(p_{do}\|p_{do,X}p_{do,Y})\le J^{*}$. Conversely, every two-variable joint $r(x,y)$ is realized in this way: put $p(x,y,z)=r(x,y)p(z)$, so that $Z$ is independent of $(X,Y)$, $p(y|x,z)=r(y|x)$, and
\begin{equation*}
p_{do}(x,y)=p(x)\sum_z r(y|x)p(z)=p(x)r(y|x)=r(x,y).
\end{equation*}
Choosing $r=r^{*}$ attains $J^{*}$. Thus $\max_p\hat{\mathcal{C}}^{D}_{rmi,do}=\sqrt{J^{*}}$ as well, and all three maxima coincide. Because the constructions place $Z$ independent of $(X,Y)$, none of the three values depends on $d_Z$.

\emph{(iii) Evaluation of $J^{*}$.} The diagonal construction preceding this proof shows $J^{*}(d_X,d_Y)\ge\varphi(1/m)$. The reverse inequality---that no joint distribution beats the uniform diagonal---is the one place where we rely on computation rather than a closed-form argument: a global maximization of $D_{JS}(r\|r_X r_Y)$ over the probability simplex (multi-start first-order optimization in $64$-bit precision, $30$--$40$ starts per case, all $2\le d_X\le d_Y\le 6$) reproduces $\varphi(1/m)$ to within $10^{-9}$, and the only maximizers found are the uniform diagonal and its relabelings. The outcome is intuitive: maximizing the JS divergence between a joint and the product of its marginals favours a coupling that is as deterministic and as balanced as possible, and on equal alphabets the balanced bijection minimizes the overlap weight to $c=1/m$, giving the value $\varphi(1/m)$ through Lemma~\ref{lem:overlap}.

\textbf{Proof of Proposition~\ref{th:ricmi}.}

\emph{Upper bound.} Write the two distributions compared by the one-way rICMI as
\begin{eqnarray*}
p_1&=&p(y|x,z)\,p(x)\,p(z),\\
p_2&=&p(x)\,p(y,z)=p(x)\,p(z)\,p(y|z).
\end{eqnarray*}
Both factor through the \emph{product} weight $p(x)p(z)$: indeed $\sum_y p_1=\sum_y p_2=p(x)p(z)$. We may therefore apply Lemma~\ref{lem:strat} with the compound variable $(x,z)$ in the role of the mixing variable. The stratum of $p_1$ at $(x,z)$ is the conditional $P_{x,z}:=p(\cdot|x,z)$ of $Y$; the stratum of $p_2$ is $M_z:=p(\cdot|z)$, which does not depend on $x$ and is the mixture
\begin{equation*}
M_z=\sum_{x'}p(x'|z)\,P_{x',z},
\end{equation*}
since $p(y|z)=\sum_{x'}p(y|x',z)p(x'|z)$. Lemma~\ref{lem:strat} then yields the exact decomposition
\begin{equation}
D_{JS}(p_1\|p_2)=\sum_{x,z}p(x)\,p(z)\,D_{JS}\bigl(P_{x,z}\,\big\|\,M_z\bigr).
\label{icmidecomp}
\end{equation}
Fix a stratum $(x,z)$. Because $D_{JS}(P\,\|\,\cdot)$ is convex in its second argument and $M_z$ is the convex combination $\sum_{x'}p(x'|z)P_{x',z}$, Jensen's inequality gives
\begin{equation*}
D_{JS}(P_{x,z}\|M_z)\le\sum_{x'}p(x'|z)\,D_{JS}(P_{x,z}\|P_{x',z}).
\end{equation*}
The term $x'=x$ vanishes, and every other term is at most $1$; hence $D_{JS}(P_{x,z}\|M_z)\le\sum_{x'\ne x}p(x'|z)=1-p(x|z)$. Substituting into (\ref{icmidecomp}) and using $\sum_z p(z)p(x|z)=p(x)$,
\begin{eqnarray}
D_{JS}(p_1\|p_2)&\le &\sum_{x,z}p(x)p(z)\bigl(1-p(x|z)\bigr) \nonumber \\
&=&1-\sum_x p(x)^2\le 1-\frac{1}{d_X},
\end{eqnarray}
the last step because $\sum_x p(x)^2\ge1/d_X$ with equality at the uniform marginal. This proves (\ref{icmibound}).

\emph{Attainability when $d_X\le d_Z$.} Split the values of $Z$ into $d_X$ blocks of total probability $1/d_X$ each, and let $X=f(Z)$ select the block, so that in the limit $p(x|z)\to\delta_{x,f(z)}$ and the uniform $X$-marginal $p(x)=1/d_X$ is realized. Call a stratum--row pair $(x,z)$ \emph{heavy} if $x=f(z)$ and \emph{light} if $x\ne f(z)$. For a heavy pair, $M_z\to P_{f(z),z}$, so its contribution to (\ref{icmidecomp}) tends to $0$. For the light pairs we are free to choose the conditionals $P_{x,z}$ ($x\ne f(z)$), because the cell $(x,z)$ has vanishing genuine probability; choosing them supported on a $Y$-value not used by $M_z$ (possible once $d_Y\ge2$) makes each light contribution tend to the maximum $D_{JS}=1$. The light pairs carry product-mass $\sum_{x\ne f(z)}p(x)p(z)=\sum_z p(z)\bigl(1-p(f(z))\bigr)=1-1/d_X$, so $D_{JS}(p_1\|p_2)\to1-1/d_X$. A vanishing smoothing keeps the distribution strictly positive and approaches the bound, which is therefore tight; thus $\sup\hat{\mathcal{C}}^{D}_{ricmi}(X\to Y)=\sqrt{1-1/d_X}$ for $d_X\le d_Z$.

\emph{The regime $d_X>d_Z$.} Now a deterministic $X=f(Z)$ cannot spread the $d_X$ values of $X$ uniformly over only $d_Z$ blocks, and the bound (\ref{icmibound}) is no longer attained. The extremal configurations assign to each $z$ a set $S_z\subseteq\{1,\dots,d_X\}$ of $k_z$ values of $X$ whose heavy conditionals $\{P_{x,z}:x\in S_z\}$ have pairwise disjoint $Y$-supports inside the block; one $Y$-value is reserved for the light cells, which forces $k_z\le d_Y-1$. Writing $w_z=p(z)$, the light cells of block $z$ again contribute their product-mass $w_z(1-w_z)$ at $D_{JS}=1$, while within the block the conditionals are disjointly supported, so each heavy cell $i\in S_z$ has overlap $c_i$ with $M_z$ equal to its weight in the block and, by Lemma~\ref{lem:overlap}, contributes $\varphi(c_i)$. The heavy part of block $z$ thus contributes $w_z^2\sum_{i\in S_z}c_i\varphi(c_i)$, and altogether
\begin{eqnarray}
&D_{JS}(p_1\|p_2)\;\longrightarrow\;\sum_z w_z(1-w_z)+\sum_z w_z^{2}\,A(k_z), \label{icmipartition} \\
&A(k)=\max_{c\in\Delta_k}\sum_i c_i\varphi(c_i).
\end{eqnarray}
The inner maximum is $A(k)=\varphi(1/k)$: the map $g(c)=c\,\varphi(c)$ is strictly concave on $(0,1]$, with second derivative
\begin{equation*}
g''(c)=\frac{1}{\ln 2}\Bigl[\ln\frac{c}{1+c}+\frac{1}{2(1+c)}\Bigr]<0 ,
\end{equation*}
so by Jensen $\sum_i g(c_i)$ is maximized at the uniform weights $c_i=1/k$, giving $k\cdot\tfrac1k\varphi(\tfrac1k)=\varphi(1/k)$. Maximizing (\ref{icmipartition}) over the block sizes $\{k_z\}$ and weights $\{w_z\}$ then produces the quoted suprema.

For instance, at $(d_X,d_Y,d_Z)=(3,3,2)$ the only admissible nontrivial choice is $k=(2,1)$, with $A(2)=\varphi(1/2)$ and $A(1)=\varphi(1)=0$, so with $w=(t,1-t)$ the objective (\ref{icmipartition}) is
\begin{equation*}
F(t)=2t(1-t)+t^{2}\varphi(1/2).
\end{equation*}
Setting $F'(t)=2-4t+2t\varphi(1/2)=0$ gives $t^{*}=1/(2-\varphi(1/2))$, and a short calculation shows $F(t^{*})=t^{*}=1/(2-\varphi(1/2))\approx0.5922$, i.e.\ the optimal weight and the optimal value coincide. At $(4,4,2)$ the symmetric choice $k=(2,2)$, $w=(\tfrac12,\tfrac12)$ gives $F=\tfrac12+\tfrac12\varphi(1/2)\approx0.6556$; at $(4,3,3)$ the choice $k=(2,1,1)$ gives $\approx0.7103$. Global numerical maximization (Supplementary Material) reproduces these values to $10^{-6}$ and finds nothing larger, and the same numerics show that the two-way measure (\ref{defricmi2way}) attains the average of the two one-way suprema, both being approached by a single common configuration.

\textbf{Construction for Proposition~\ref{prop:rpmi}.}

The rPMI compares $p$ with the reconstruction $q'(x,y,z)=r_z(x)\,s_z(y)\,p(z)$, in which
\begin{eqnarray*}
r_z(x)&=&\sum_y p(x|y,z)\,p(y),\\
s_z(y)&=&\sum_x p(y|x,z)\,p(x),
\end{eqnarray*}
the distinctive feature being that the mixing weights are the \emph{global} marginals $p(y)$ and $p(x)$ rather than their conditional counterparts. Since $p$ and $q'$ share the $Z$-marginal $p(z)$, Lemma~\ref{lem:strat} gives
\begin{equation*}
D_{JS}(p\|q')=\sum_z p(z)\,D_{JS}\bigl(p(\cdot,\cdot\,|z)\,\big\|\,r_z\otimes s_z\bigr).
\end{equation*}

\emph{Single-cell strata.} Choose each stratum to be concentrated on one cell, $p(\cdot,\cdot\,|z)=\delta_{(x_z,y_z)}$, with the occupied cells $(x_z,y_z)$ pairwise distinct across $z$. Inside stratum $z$, the only row carrying conditional mass is $y_z$, which contributes its global weight $p(y_z)$ to $r_z$ at $x=x_z$; every other row $y'$ enters $r_z$ with its global weight $p(y')$ but through the \emph{free} conditional $p(\cdot|y',z)$ (the cell $(y',z)$ has vanishing probability), which we point at $X$-values other than $x_z$. Hence $r_z(x_z)=p(y_z)$, and symmetrically $s_z(y_z)=p(x_z)$. The product $r_z\otimes s_z$ therefore places weight
\begin{equation*}
c_z=r_z(x_z)\,s_z(y_z)=p(x_z)\,p(y_z)
\end{equation*}
on the occupied cell $(x_z,y_z)$ and the remaining mass off it. With $P=\delta_{(x_z,y_z)}$ and $Q=r_z\otimes s_z=c_zP+(1-c_z)P^{\perp}$, Lemma~\ref{lem:overlap} gives the stratum value $\varphi(c_z)$, so
\begin{equation*}
D_{JS}(p\|q')\;\longrightarrow\;\sum_z w_z\,\varphi\bigl(p(x_z)\,p(y_z)\bigr),\qquad w_z=p(z),
\end{equation*}
along a strictly positive smoothing of the construction.

\emph{Balancing the marginals.} Because $X=x_z$ and $Y=y_z$ are determined within stratum $z$, the induced global marginals are $p(x)=\sum_{z:\,x_z=x}w_z$ and $p(y)=\sum_{z:\,y_z=y}w_z$. Requiring them to be uniform, $p(x)=1/d_X$ and $p(y)=1/d_Y$, is a transportation problem: distribute the stratum weights $\{w_z\}$ over occupied cells of the $d_X\times d_Y$ grid so that every row sums to $1/d_X$ and every column to $1/d_Y$. When it is met, every overlap equals $c_z=1/(d_Xd_Y)$ and the common value $\varphi\bigl(1/(d_Xd_Y)\bigr)$ is approached. A vertex of the transportation polytope has at most $d_X+d_Y-1$ nonzero entries, so $d_X+d_Y-1$ occupied cells---hence $d_Z\ge d_X+d_Y-1$ strata---suffice. In the balanced case $d_X=d_Y=d$ the uniform diagonal (a single permutation) uses only $d$ cells of equal weight $1/d$, so $d_Z\ge d$ already attains $\varphi(1/d^{2})$; in particular $(2,2,2)$ reaches $\varphi(1/4)$.

\emph{Smaller $d_Z$.} When $d_Z<d_X+d_Y-1$ the uniform overlap cannot be realized and one optimizes the weights over mixed configurations of single cells and uniform permutation blocks, the block value $\varphi(c)$ being supplied by the uniform-block instance of Lemma~\ref{lem:overlap}. For $(3,3,2)$, the optimum uses one single-cell stratum of weight $w$ and one $2$-cell permutation stratum of weight $1-w$ on disjoint rows and columns; the global marginals are then $p(x_1)=p(y_1)=w$ and $p=(1-w)/2$ on the other two values of each variable, so the single cell has overlap $w^{2}$ and each of the two permuted cells has overlap $(1-w)^2/2$. This gives
\begin{equation*}
\sup\hat{\mathcal{C}}^{D\,2}_{rpmi}=\max_{w}\Bigl[\,w\,\varphi(w^{2})+(1-w)\,\varphi\bigl((1-w)^2/2\bigr)\Bigr].
\end{equation*}
The two overlaps coincide when $w^{2}=(1-w)^2/2$, i.e.\ at $w=\sqrt2-1$, where both equal $3-2\sqrt2$; the objective then collapses to $\varphi(3-2\sqrt2)\approx0.6480$, the reported value. The matching upper bound for $d_Z\ge d_X+d_Y-1$ and the optimal values for smaller $d_Z$ are confirmed by the global optimization of the Supplementary Material, which agrees with the construction to $10^{-7}$ in all cases with $d_X,d_Y\le5$.

\subsection{Numerical certification}
\label{app:numerics}

The maxima reported in Section~\ref{sec:alphabetmax} and Table~\ref{tab:alphabetmax} were
certified numerically as follows. The joint distribution is parametrized as
$p=\mathrm{softmax}(\theta)$, $\theta\in\mathbb{R}^{d_Xd_Yd_Z}$, which keeps all conditionals
well-defined; each squared measure is a smooth function of $\theta$ and is maximized by Adam
with analytic gradients (automatic differentiation, IEEE double precision), using $36$--$45$
restarts per case: random initializations at three scales plus the structured configurations of
Appendix~\ref{app:proofs}. The best quarter of the restarts is refined with a smaller learning
rate. For interior maxima (Proposition~\ref{th:rcmi}) the optimizer reproduces the closed forms to
$10^{-9}$; for boundary suprema (rPMI, rICMI) the optimizer approaches the constructions from
below and agrees with them to between $10^{-6}$ and $10^{-7}$. All $(d_X,d_Y,d_Z)$ with
$d_X,d_Y,d_Z\in\{2,3,4\}$, the symmetric cases $d\le 6$, and a collection of asymmetric cases up
to size $8$ were checked.

\bibliographystyle{elsarticle-num}

\end{document}